\title{Combining Weak Distributive Laws: Application to Up-To Techniques} %TODO Please add
\titlerunning{Combining Weak Distributive Laws: Application to Up-To Techniques} %TODO optional, please use if title is longer than one line
\author{Alexandre Goy}{Université Paris-Saclay, CentraleSupélec, MICS, France}{alexandre.goy@centralesupelec.fr}{}{} %TODO mandatory, please use full name; only 1 author per \author macro; first two parameters are mandatory, other parameters can be empty. Please provide at least the name of the affiliation and the country. The full address is optional
\author{Daniela Petri{\c s}an}{Université Paris-Diderot, IRIF, France}{petrisan@irif.fr}{}{}
\authorrunning{A. Goy and D. Petri{\c s}an} %TODO mandatory. First: Use abbreviated first/middle names. Second (only in severe cases): Use first author plus 'et al.'
\keywords{Weak distributive law, Weak lifting, Powerset monad, Distribution monad, Trace semantics, Behavioural equivalence, Up-to techniques, Alternating automata} %TODO mandatory; please add comma-separated list of keywords
\newrobustcmd\grf{\kl[\grf]{f_*}}
\newrobustcmd\gridX{\kl[\grf]{\mathit{id_X}_*}}
\newrobustcmd\EMSt{\kl[\EMSt]{\EM{\monSt}}}
\newrobustcmd\EMSwl{\kl[\EMSwl]{\EM{\monSwl}}}
\newrobustcmd\funUuSt{\kl[\funUuSt]{\funU^{\monSt}}}
\newrobustcmd\funFuSt{\kl[\funFuSt]{\funF^{\monSt}}}
\newrobustcmd\muTX{\kl[\muTX]{\mu^{\monT}_X}}
\newrobustcmd\muTTX{\kl[\muTTX]{\mu^{\monT}_{\funT X}}}
\newrobustcmd\muTSTX{\kl[\muTSTX]{\mu^{\monT}_{\funS\funT X}}}
\newrobustcmd\muTGSTX{\kl[\muTGSTX]{\mu^{\monT}_{\funG\funS\funT X}}}
\newrobustcmd\muTY{\kl[\muTY]{\mu^{\monT}_Y}}
\newrobustcmd\muPX{\kl[\muPX]{\mu^{\monP}_X}}
\newrobustcmd\muPS{\kl[\muPS]{\mu^{\monP}_S}}
\newrobustcmd\muDX{\kl[\muDX]{\mu^{\monD}_X}}
\newrobustcmd\muDY{\kl[\muDY]{\mu^{\monD}_Y}}
\newrobustcmd\muPfX{\kl[\muPfX]{\mu^{\monPf}_X}}
\newrobustcmd\muPPfx{\kl[\muPPfx]{\mu^{\monP}_{\funPf X}}}
\newrobustcmd\muPcX{\kl[\muPcX]{\mu^{\monPc}_X}}
\newrobustcmd\etaTX{\kl[\etaTX]{\eta^{\monT}_X}}
\newrobustcmd\etaTY{\kl[\etaTY]{\eta^{\monT}_Y}}
\newrobustcmd\etaTZ{\kl[\etaTZ]{\eta^{\monT}_Z}}
\newrobustcmd\etaDX{\kl[\etaDX]{\eta^{\monD}_X}}
\newrobustcmd\etaPX{\kl[\etaPX]{\eta^{\monP}_X}}
\newrobustcmd\etaPcX{\kl[\etaPcX]{\eta^{\monPc}_X}}
\newrobustcmd\etaSX{\kl[\etaSX]{\eta^{\monS}_X}}
\newrobustcmd\etaTPX{\kl[\etaTPX]{\eta^{\monT}_{\funP X}}}
\newrobustcmd\etaTSX{\kl[\etaTSX]{\eta^{\monT}_{\funS X}}}
\newrobustcmd\etaDPX{\kl[\etaDPX]{\eta^{\monT}_{\funD X}}}
\newrobustcmd\etaTTX{\kl[\etaTTX]{\eta^{\monT}_{\funT X}}}
\newrobustcmd\etaDz{\kl[\etaDz]{\eta^{\monD}_{\{0\}}}}
\newrobustcmd\etaDzu{\kl[\etaDzu]{\eta^{\monD}_{\{0,1\}}}}
\newrobustcmd\etaSXa{\kl[\etaSXa]{\eta^{\monSc}_{(X,a)}}}
\newrobustcmd\muSX{\kl[\muSX]{\mu^{\monS}_X}}
\newrobustcmd\WDL{\kl[\WDL]{\Rightarrow_w}}
\newcommand{\mb}[1]{\boldsymbol{\mathsf{#1}}}
\renewcommand{\sf}[1]{\mathsf{#1}}
\renewcommand{\phi}{\varphi}
\newcommand{\cat}[1]{\mathsf{#1}}
\newcommand{\fun}[1]{{#1}}
\newcommand{\mon}[1]{\mb{#1}}
\DeclareMathOperator{\id}{id}
\DeclareMathOperator{\supp}{\sf{supp}}
\DeclareMathOperator{\Id}{Id}
\newrobustcmd\Set{\kl[\Set]{\cat{Set}}}
\newrobustcmd\Rel{\kl[\Rel]{\cat{Rel}}}
\newrobustcmd\KHaus{\kl[\KHaus]{\cat{KHaus}}}
\newrobustcmd\catC{\kl[\catC]{\cat{C}}}
\newrobustcmd{\Kl}[1]{\kl[\Kl]{\cat{Kl}(#1)}}
\newrobustcmd{\EM}[1]{\kl[\EM]{\cat{EM}(#1)}}
\newrobustcmd{\Alg}[1]{\kl[\Alg]{\cat{Alg}(#1)}}
\newrobustcmd{\Coalg}[1]{\kl[\Coalg]{\cat{Coalg}(#1)}}
\newrobustcmd{\SEM}[1]{\kl[\SEM]{\cat{SEM}(#1)}}
\newrobustcmd{\unit}[1]{\kl[\eta]{\eta^{#1}}}
\newrobustcmd{\mult}[1]{\kl[\mu]{\mu^{#1}}}
\newrobustcmd\funB{\kl[\funB]{\fun{\beta}}}
\newrobustcmd\monB{\kl[\monB]{\mon{B}}}
\newrobustcmd{\etaB}{\kl[\etaB]{\unit{\monB}}}
\newrobustcmd{\muB}{\kl[\muB]{\mult{\monB}}}
\newrobustcmd\funD{\kl[\funD]{\fun{D}}}
\newrobustcmd\monD{\kl[\monD]{\mon{D}}}
\newrobustcmd{\etaD}{\kl[\etaD]{\unit{\monD}}}
\newrobustcmd{\muD}{\kl[\muD]{\mult{\monD}}}
\newrobustcmd\funUuD{\kl[\funUuD]{\funU^{\monD}}}
\newrobustcmd\funFuD{\kl[\funFuD]{\funF^{\monD}}}
\newrobustcmd\dld{\kl[\dld]{\delta}}
\newrobustcmd\wdld{\kl[\wdld]{\delta}}
\newrobustcmd\wdls{\kl[\wdls]{\sigma}}
\newrobustcmd\wdlt{\kl[\wdlt]{\tau}}
\newrobustcmd\wdll{\kl[\wdll]{\lambda}}
\newrobustcmd\funDt{\kl[\funDt]{\widetilde{\funD}}}
\newrobustcmd\funF{\kl[\funF]{\fun{F}}}
\newrobustcmd\funFt{\kl[\funFt]{\widetilde{\funF}}}
\newrobustcmd\funG{\kl[\funG]{\fun{G}}}
\newrobustcmd\monG{\kl[\monG]{\mon{G}}}
\newrobustcmd\etaG{\kl[\etaG]{\eta{G}}}
\newrobustcmd\muG{\kl[\muG]{\mu{G}}}
\newrobustcmd\funGl{\kl[\funGl]{\widehat{\funG}}}
\newrobustcmd\funGe{\kl[\funGe]{\overline{\funG}}}
\newrobustcmd\funGwl{\kl[\funGwl]{\widehat{\funG}}}
\newrobustcmd\funGwlb{\kl[\funGwlb]{\widecheck{\funG}}}
\newrobustcmd\funI{\kl[\funI]{\fun{I}}}
\newrobustcmd\IdC{\kl[\IdC]{\Id_{\catC}}}
\newrobustcmd\iotawl{\kl[\iotawl]{\iota}}
\newrobustcmd\iotawls{\kl[\iotawls]{\iota^{\wdls}}}
\newrobustcmd\iotawlt{\kl[\iotawlt]{\iota^{\wdlt}}}
\newrobustcmd\funK{\kl[\funK]{\fun{K}}}
\newrobustcmd\funM{\kl[\funM]{\fun{M}}}
\newrobustcmd\funP{\kl[\funP]{\fun{P}}}
\newrobustcmd\monP{\kl[\monP]{\mon{P}}}
\newrobustcmd{\etaP}{\kl[\etaP]{\unit{\monP}}}
\newrobustcmd{\muP}{\kl[\muP]{\mult{\monP}}}
\newrobustcmd\funFuP{\kl[\funFuPf]{\funF^{\monP}}}
\newrobustcmd\funUuP{\kl[\funFuPf]{\funU^{\monP}}}
\newrobustcmd\funPc{\kl[\funPc]{\fun{P}_{\!c}}}
\newrobustcmd\monPc{\kl[\monPc]{\mon{P}_{\!c}}}
\newrobustcmd\etaPc{\kl[\etaPc]{\unit{\monPc}}}
\newrobustcmd\muPc{\kl[\muPc]{\mult{\monPc}}}
\newrobustcmd\funFuPc{\kl[\FuPc]{\funF^{\monPc}}}
\newrobustcmd\funUuPc{\kl[\UuPc]{\funU^{\monPc}}}
\newrobustcmd\funPf{\kl[\funPf]{\fun{P}_{\!f}}}
\newrobustcmd\monPf{\kl[\monPf]{\mon{P}_{\!f}}}
\newrobustcmd\etaPf{\kl[\etaPf]{\unit{\monPf}}}
\newrobustcmd\muPf{\kl[\muPf]{\mult{\monPf}}}
\newrobustcmd\funPft{\kl[\funPft]{\widetilde{\funPf}}}
\newrobustcmd\funFuPf{\kl[\funFuPf]{\funF^{\monPf}}}
\newrobustcmd\funUuPf{\kl[\funUuPf]{\funU^{\monPf}}}
\newrobustcmd\funPt{\kl[\funPt]{\widetilde{\funP}}}
\newrobustcmd\monPt{\kl[\monPt]{\widetilde{\monP}}}
\newrobustcmd\funPwl{\kl[\funPwl]{\widehat{\funP}}}
\newrobustcmd\piwl{\kl[\piwl]{\pi}}
\newrobustcmd\piwls{\kl[\piwls]{\pi^{\wdls}}}
\newrobustcmd\piwlt{\kl[\piwlt]{\pi^{\wdlt}}}
\newrobustcmd\funR{\kl[\funR]{\fun{R}}}
\newrobustcmd\monR{\kl[\monR]{\mon{R}}}
\newrobustcmd\etaR{\kl[\etaR]{\unit{\monR}}}
\newrobustcmd\muR{\kl[\muR]{\mult{\monR}}}
\newrobustcmd\funS{\kl[\funS]{\fun{S}}}
\newrobustcmd\monS{\kl[\monS]{\mon{S}}}
\newrobustcmd{\etaS}{\kl[\etaS]{\unit{\monS}}}
\newrobustcmd{\muS}{\kl[\muS]{\mult{\monS}}}
\newrobustcmd\funFdS{\kl[\funFdS]{\funF_{\monS}}}
\newrobustcmd\funFuS{\kl[\funFuS]{\funF^{\monS}}}
\newrobustcmd\funUuS{\kl[\funUuS]{\funU^{\monS}}}
\newrobustcmd\funSl{\kl[\funSl]{\widehat{\funS}}}
\newrobustcmd\monSl{\kl[\monSl]{\widehat{\monS}}}
\newrobustcmd\etaSl{\kl[\etaSl]{\widehat{\etaS}}}
\newrobustcmd\muSl{\kl[\muSl]{\widehat{\muS}}}
\newrobustcmd\funSwl{\kl[\funSwl]{\widehat{\funS}}}
\newrobustcmd\monSwl{\kl[\monSwl]{\widehat{\monS}}}
\newrobustcmd\etaSwl{\kl[\etaSwl]{\widehat{\etaS}}}
\newrobustcmd\muSwl{\kl[\muSwl]{\widehat{\muS}}}
\newrobustcmd\funSc{\kl[\funSc]{\check{\funS}}}
\newrobustcmd\monSc{\kl[\monSc]{\check{\monS}}}
\newrobustcmd\etaSc{\kl[\etaSc]{\check{\etaS}}}
\newrobustcmd\muSc{\kl[\muSc]{\check{\muS}}}
\newrobustcmd\funSt{\kl[\funSt]{\widetilde{\funS}}}
\newrobustcmd\monSt{\kl[\monSt]{\widetilde{\monS}}}
\newrobustcmd{\etaSt}{\kl[\etaSt]{\widetilde{\etaS}}}
\newrobustcmd{\muSt}{\kl[\muSt]{\widetilde{\muS}}}
\newrobustcmd\funSTw{\kl[\funSTw]{\widetilde{\funS\funT}}}
\newrobustcmd\monSTw{\kl[\monSTw]{\widetilde{\monS\monT}}}
\newrobustcmd\muSTw{\kl[\muSTw]{\mult{\monSTw}}}
\newrobustcmd\funSTt{\kl[\funSTt]{\widetilde{\fun{ST}}}}
\newrobustcmd\monSTt{\kl[\monSTt]{\widetilde{\mon{ST}}}}
\newrobustcmd\funT{\kl[\funT]{\fun{T}}}
\newrobustcmd\monT{\kl[\monT]{\mon{T}}}
\newrobustcmd{\etaT}{\kl[\etaT]{\unit{\monT}}}
\newrobustcmd{\muT}{\kl[\muT]{\mult{\monT}}}
\newrobustcmd\funUuT{\kl[\funUuT]{\funU^{\monT}}}
\newrobustcmd\funUdT{\kl[\funUdT]{\funU_{\monT}}}
\newrobustcmd\funFdT{\kl[\funFdT]{\funF_{\monT}}}
\newrobustcmd\funFuT{\kl[\funFuT]{\funF^{\monT}}}
\newrobustcmd\funTc{\kl[\funTc]{\check{\funT}}}
\newrobustcmd\monTc{\kl[\monTc]{\check{\monT}}}
\newrobustcmd\etaTc{\kl[\etaTc]{\check{\etaT}}}
\newrobustcmd\muTc{\kl[\muTc]{\check{\muT}}}
\newrobustcmd\funTe{\kl[\funTe]{\overline{\funT}}}
\newrobustcmd\monTe{\kl[\monTe]{\overline{\monT}}}
\newrobustcmd\etaTe{\kl[\etaTe]{\overline{\etaT}}}
\newrobustcmd\muTe{\kl[\muTe]{\overline{\muT}}}
\newrobustcmd\funTwe{\kl[\funTwe]{\overline{\funT}}}
\newrobustcmd\monTwe{\kl[\monTwe]{\overline{\monT}}}
\newrobustcmd\etaTwe{\kl[\etaTwe]{\overline{\etaT}}}
\newrobustcmd\muTwe{\kl[\muTwe]{\overline{\muT}}}
\newrobustcmd\funTt{\kl[\funTt]{\widetilde{\funT}}}
\newrobustcmd\monTt{\kl[\monTt]{\widetilde{\monT}}}
\newrobustcmd{\etaTt}{\kl[\etaTt]{\widetilde{\etaT}}}
\newrobustcmd{\muTt}{\kl[\muTt]{\widetilde{\muT}}}
\newrobustcmd\funU{\kl[\funU]{\fun{U}}}
\newrobustcmd\funV{\kl[\funV]{\fun{V}}}
\newrobustcmd\monV{\kl[\monV]{\mon{V}}}
\newrobustcmd\etaV{\kl[\etaV]{\unit{\monV}}}
\newrobustcmd\muV{\kl[\muV]{\mult{\monV}}}
\newrobustcmd\funTwo[1]{\kl[\funTwo]{\fun{2}^{#1}}}
\newrobustcmd{\KlP}{\kl[\KlP]{\Kl{\monP}}}
\newrobustcmd{\KlT}{\kl[\KlT]{\Kl{\monT}}}
\newrobustcmd{\KlS}{\kl[\KlS]{\Kl{\monS}}}
\newrobustcmd{\EMP}{\kl[\EMP]{\EM{\monP}}}
\newrobustcmd{\EMD}{\kl[\EMD]{\EM{\monD}}}
\newrobustcmd{\EMT}{\kl[\EMT]{\EM{\monT}}}
\newrobustcmd{\EMB}{\kl[\EMB]{\EM{\monB}}}
\newrobustcmd{\EMS}{\kl[\EMS]{\EM{\monS}}}
\newrobustcmd{\SEMT}{\kl[\SEMT]{\SEM{\monT}}}
\newrobustcmd{\SEMD}{\kl[\SEMD]{\SEM{\monD}}}
\newrobustcmd{\EMPc}{\kl[\EMPc]{\EM{\monPc}}}
\newrobustcmd{\EMPf}{\kl[\EMPf]{\EM{\monPf}}}
\newrobustcmd{\chash}{\kl[\chash]{c^\#}}
\newrobustcmd{\monST}{\kl[\monST]\monS \sticktogether \monT}
\newrobustcmd{\monPP}{\kl[\monPP]\monP \sticktogether \monP}
\newrobustcmd{\monPD}{\kl[\monPD]\monP \sticktogether \monD}
\newrobustcmd{\monPB}{\kl[\monPB]\monP \sticktogether \monB}
\newrobustcmd{\muST}{\kl[\muST] \mu^{\monST}}
\newrobustcmd{\etaST}{\kl[\etaST] \eta^{\monST}}
\newrobustcmd{\monRS}{\kl[\monRS]\monR \sticktogether \monS}
\newrobustcmd{\monRST}{\kl[\monRST]\monR \sticktogether \monS \sticktogether \monT}
\newcommand{\sticktogether}{%
	\mkern-4.5mu
	\mathchoice{}{}{\mkern0.2mu}{\mkern0.5mu}%
}
\newcommand\widecheck[1]{%
\savestack{\tmpbox}{\stretchto{%
  \scaleto{%
    \scalerel*[\widthof{\ensuremath{#1}}]{\kern-.6pt\bigwedge\kern-.6pt}%
    {\rule[-\textheight/2]{1ex}{\textheight}}%WIDTH-LIMITED BIG WEDGE
  }{\textheight}% 
}{0.5ex}}%
\stackon[1pt]{#1}{\scalebox{-1}{\tmpbox}}%
}
\newrobustcmd{\unions}{\kl[\unions]{\sf{unions}}}
\newrobustcmd{\semAA}[1]{\kl[\semAA]{\left[ #1 \right]_{\sf{aa}}}}
\newrobustcmd{\semNDA}[1]{\kl[\semNDA]{\left[ #1 \right]_{\sf{nda}}}}
\newrobustcmd{\bisim}{\kl[\bisim]{\sim}}
\begin{document}

\maketitle

%TODO mandatory: add short abstract of the document
\begin{abstract}
The coalgebraic modelling of alternating automata and of probabilistic automata has long been obstructed by the absence of distributive laws of the powerset monad over itself, respectively of the powerset monad over the finite distribution monad. This can be fixed using the framework of weak distributive laws. We extend this framework to the case when one of the monads is only a functor. We provide abstract compositionality results, a generalized determinization procedure, and systematic soundness of up-to techniques. Along the way, we apply these results to alternating automata as a motivating example. Another example is given by probabilistic automata, for which our results yield soundness of bisimulation up-to convex hull.
\end{abstract}

\section{Introduction}

Coalgebras have known great success in the abstract modelling of a
wide range of systems originating from computer science. The theory is
parametric and modular with respect to the base category and the type
functor. Compositionality concerns as well as generalization of major
landmarks such as the determinization constructions for
automata~\cite{SilvaBBR10} led the coalgebra community to make heavy
use of Beck's theory of distributive laws, which can be seen as a way
% of forcing commutation between monads,
of composing monads, see for
e.g.~\cite{jacobs2017introduction,jacobs2006bialgebraic-review,klin2015coalgebraic}. Given
two monads $\monS$, $\monT$ modelling two branching types for a given
system, a distributive law is a natural transformation of type
$\funT\funS \Rightarrow \funS\funT$ that satisfies four coherence
diagrams. This simple swapping operation suffices to produce a monad
on the composite functor $\funS\funT$, for multiplication can now be
defined as
$\funS\funT\funS\funT \Rightarrow \funS\funS\funT\funT \Rightarrow
\funS\funT$. Notably, distributive laws have been extended to the case
when one monad structure and the two corresponding coherence diagrams
are suppressed. This slight alteration produces a powerful tool to
model the interplay between branching behaviour (represented by a
monad) and machine-like behaviour (represented by a plain functor).

The powerset monad $\monP$ and the finite distribution monad $\monD$
on the category of sets are amongst the most commonly used, as they
constitute the basic bricks for representing respectively
non-deterministic and probabilistic behaviour. In the recent years, the
community stumbled over the seemingly surprising fact that there is
neither a distributive law of type
$\monD\monP \Rightarrow \monP\monD$, see~\cite{Varacca03PhD}, nor one
of type $\monP\monP \Rightarrow \monP\monP$,
see~\cite{KLIN2018261}. Acknowledging that further such
impossibilities may arise, Zwart and Marsden 
% freshly
recently provided general algebraic conditions that make distributive
laws unattainable~\cite{Zwart_Marsden}. One unpleasant impact of all these negative results
is that the coalgebraic study of alternating automata and
probabilistic automata must contend with workarounds to fit into the
theory properly~\cite{klin2015coalgebraic,Bon17,Bon19}. Alternating
automata are systems for which a transition consists of an existential
step and then a universal step --- making the composed $\funP\funP$ a
functor of choice to store transitions. Unfortunately, there is no
distributive law of powerset over itself and even no monad structure
on $\funP\funP$ at all~\cite{KLIN2018261}. Some substitute modellings
have been built to reason coalgebraically about alternating automata,
for example, by going back and forth to the category of
posets~\cite{bertrand2018coalgebraic}. Similarly, probabilistic
automata send a state to a set of distributions -- making them a
seemingly easy target for the composed $\funP\funD$. But again, the
lack of distributive law and even of possible monad structures on
$\funP \funD$ make their analysis much harder than expected.

It is not only generalized determinization that poses problems for
these systems. Another aspect where the theory is not very smooth
concerns the so called \emph{up-to techniques}.  Computing a
bisimulation between two systems can be tedious or even require an
infinite number of steps. Up-to techniques' stated objective is to
prune branches in the exploration of the state space. Originating from
a lattice-theoretic mindset, see e.g.~\cite{pous_sangiorgi_2011} for a
comprehensive account, they have proved hugely popular in the last few
years, partly on account of the impressive results of Bonchi and
Pous~\cite{bonchi2013checking} to accelerate bisimulation computation
on determinized automata. From a category-theoretic perspective,
generalized determinization, distributive laws, and the soundness of
up-to techniques are intrinsically linked, as explored
in~\cite{DBLP:conf/csl/BonchiPPR14}. In the absence of a distributive
law, we cannot reuse this compositionality results for alternating or
probabilistic automata.  A breakthrough occured in~\cite{Bon17}, where
Bonchi et al. do obtain a form of coalgebraic determinization of
probabilistic automata and use it to prove the soundness of the up-to
convex hull technique. However, some of the required constructions
have to be redone by hand, in a way that is disappointingly close to
the usual framework relying on distributive laws. Things \emph{almost}
work well, but not quite.

% Still,
% recent work from Bonchi et al. got on with this task very
% successfully. In~\cite{Bon17}, the authors establish a variety of
% well-behaved properties on probabilistic automata that would have been
% an easy matter, provided a distributive law existed. Instead, these
% properties are collected by partially redoing constructions by hand in
% a way that is disappointingly close to the usual framework. Things
% \emph{almost} work well, but not quite.

Coincidentally, a not-quite theory of distributive laws has been
brought
% up to date 
to light by Garner in a very recent paper~\cite{Gar19}. This theory
originated in the work of Street~\cite{Street_weak_laws} and Böhm~\cite{Bohm2010}, from which
Garner picked a particular set of axioms to the purpose of exhibiting
the Vietoris monad as
% a not-quite lifting
a canonical `almost' lifting of the powerset monad. The basic
observation is that in many cases, a not-quite distributive law
$\monT \monS \Rightarrow \monS \monT$ fails to be one because of one
specific coherence diagram, namely the one that states that that the
unit of $\monT$ is compatible with $\monS$. A weak distributive law is
defined as making the three other coherence diagrams commute. By a
well-rounded category-theoretic analysis, such laws are proved to
produce a distributive law-lifting-extension trinity similar to the
standard theory, as well as a monad structure that combines $\monS$
and $\monT$ but whose functor is \emph{not} $\funS\funT$. In a
previous paper~\cite{goypetrisan}, we prove that there is a weak
distributive law of type $\monD \monP \Rightarrow \monP
\monD$, % which actually correspond to
% the not-quite lifting of the powerset monad to $\monD$-algebras
% presented in~\cite{Bon17}.
and thus exhibit the convex powerset monad presented in~\cite{Bon17}
as a canonical weak lifting of the powerset monad to $\monD$-algebras.

In the present paper we continue our exploration of other applications
of weak distributive laws to the theory of systems modelled as
coalgebras. Our contributions are three-fold:

\textbf{Coalgebraic semantics for alternating automata.}
% Our first contribution pertains to the coalgebraic semantics for
% alternating automata.
Adapting an example from Garner, we
point out that there is a weak distributive law
of type $\monP \monP \Rightarrow \monP \monP$. We use this to enlight that
the procedure for determinization of alternating
automata of~\cite{klin2015coalgebraic} is canonical in the sense
of weak distributive laws.
% modelled with double covariant powerset and compatibility
% of the associated up-to technique.
Notice however that alternating automata are coalgebras
for the functor $2\times(\funP\funP)^A$, and not just $\funP\funP$,
which leads us to our next contribution.

\textbf{Generalized determinization.} Secondly, we extend the theory
of generalized determinization via weak distributive laws to the
setting where one monad is replaced by the composition of a monad and a
functor --- only one coherence diagram is left in this case. In this
context, we provide a compositionality result inspired by the work of
Cheng on iterated distributive laws (Theorem~\ref{theo:compo}).

\textbf{Soundness of up-to techniques.} Once the category-theoretic
understanding of the determinization of alternating automata,
respectively of probabilistic automata is settled, we are ready to
tackle another application of weak distributive laws, namely to
proving the soundness of \emph{up-to techniques} and exploiting the
compostionality approach
of~\cite{DBLP:conf/csl/BonchiPPR14,DBLP:journals/acta/BonchiPPR17}.
% Adapting an example from Garner, we hereby point out that there is, in
% the same manner, a weak distributive law of type
% $\monP \monP \Rightarrow \monP \monP$.
%  This paper aims to provide
% theoretical results that extend the theory of weak distributive laws
% as presented in~\cite{Gar19} to deal with computer science-related
% stakes. We first extend the weak framework to the case when one of the
% monads happens to be only a functor --- only one diagram is left for
% coherence in this case. In this context, we provide a compositionality
% result inspired by the work of Cheng on iterated distributive laws. We
% also deal with two issues of great importance to the community:
% generalized determinization, for which we provide a more general
% account than in~\cite{goypetrisan}, and compatibility of up-to
% techniques.
We show that up-to techniques obtained via \emph{weak} distributive
laws are sound, on an equal basis with the ones of distributive
laws. If in~\cite{goypetrisan} the focus was set on the weak
distributive law $\monD\monP \Rightarrow \monP\monD$ and probabilistic
automata, we now concentrate on the weak distributive law
$\monP\monP \Rightarrow \monP\monP$ and alternating automata and the
compatibility of the associated up-to technique.
% Our
% abstract results yield a new categorically-sound procedure for
% determinization of alternating automata modelled with double covariant
% powerset and compatibility of the associated up-to technique.
We also retrieve compatibility of up-to convex hull as stated
in~\cite{Bon17}.

\textbf{Synopsis.} Sections~\ref{sec:prerequisites}
and~\ref{sec:weakframework} consist of reminders about the standard
and weak theory of distributive laws, respectively. Section~\ref{sec:weakframework} ends with
a compositionality result for weak distributive laws. In
Section~\ref{sec:gendet}, we perform generalized determinization with
respect to multiple weak distributive laws, at once from a theoretical
viewpoint and on the case study of alternating
automata. Section~\ref{sec:upto} deals with compatibility of up-to
techniques: we prove that the standard (bialgebraic) method remains
valid in our case and derive the up-to techniques arising from
generalized determinization for alternating automata and probabilistic
automata.

\section{Prerequisites} \label{sec:prerequisites}

\subsection{Functors and Monads}

We hereby recall a few popular functors and monads on the category $""\Set""$ of sets and functions.

\begin{enumerate}
\item The ""finite powerset functor"" $\funP : \Set \to \Set$ maps a set $X$ to the set all of subsets of $X$ and maps a function $f : X \to Y$ to its direct image. It can be extended into a "monad" $\monP = (\funP,\etaP,\muP)$ where unit is $\etaPX(x) = \{x\}$ and multiplication is given by union. 
\item The ""finite distribution functor"" $\funD : \Set \to \Set$ maps a set $X$ to the set of finitely supported probability distributions on $X$. Given a function $f : X \to Y$, the function $\funD f$ maps a probability distribution $\phi \in \funD X$ to its pushforward measure with respect to $f$:
\begin{equation}
\funD f (\phi) = y \mapsto \phi[f^{-1}(\{y\})] = \sum_{x \in f^{-1}(\{y\})} \phi(x)
\end{equation}
The functor $\funD$ can be extended as well into a monad $\monD = (\funD,\etaD,\muD)$ where unit is taking the Dirac distribution $\etaDX(x) = \delta_x$ and multiplication is given by distribution flattening:
\begin{equation}
\muD(\Phi) = x \mapsto \sum_{\phi \in \funD X} \Phi(\phi) \phi(x)
\end{equation}
\item Assume $A$ is a fixed finite alphabet. The ""machine functor"" $\funM : \Set \to \Set$ maps a set $X$ to the set $2 \times X^A$ whose elements are pairs $(o,t)$ with $o \in \{0,1\}$ and $t : A \to X$. It maps a function $f : X \to Y$ to the function $\funM f : (o,t) \mapsto (o,a \mapsto f(t(a)))$
\end{enumerate}

\subsection{Distributive Laws}

In this section we recall Beck's framework of "distributive laws", "liftings" and "extensions"~\cite{beck:distributiveLaws}. Let $""\monT"" = (\funT, \etaT, \muT)$, $\monS = (\funS, \etaS, \muS)$ be monads on a category $""\catC""$.

\begin{definition}[Distributive Law]
A ""distributive law"" of type $\monT\monS \Rightarrow \monS\monT$ is a natural transformation $\dld : \funT\funS \Rightarrow \funS\funT$ making the following diagrams commute:
\begin{center}
\begin{tikzcd}
\funT\funT\funS \ar[r, "\funT\dld" above] \ar[d, "\muT\funS" left] \arrow[drr, phantom, "(\muT)"]
 & \funT\funS\funT \ar[r, "\dld\funT" above] & \funS\funT\funT \ar[d,"\funS\muT" right] & \funT\funS\funS \ar[r, "\dld\funS" above] \ar[d, "\funT\muS" left] \arrow[drr, phantom, "(\muS)"] & \funS\funT\funS \ar[r, "\funS\dld" above] & \funS\funS\funT \ar[d,"\muS\funT"] \\
\funT \funS \ar[rr, "\dld" below] & & \funS\funT & \funT\funS \ar[rr,"\dld" below] & & \funS\funT \\
\funT \funS \ar[rr, "\dld" above] & {} & \funS\funT & \funT\funS \ar[rr,"\dld" above] & {} & \funS\funT \\
& \funS \ar[ul, "\etaT\funS" below left] \ar[ur, "\funS\etaT" below right] \arrow[u, phantom, "(\etaT)", near end] & & & \funT \ar[ul, "\funT\etaS" below left] \ar[ur, "\etaS\funT" below right] \arrow[u, phantom, "(\etaS)", near end] & \\
\end{tikzcd}
\end{center}
\end{definition}

A ""lifting"" of $\monS$ on $\monT$ is a monad $\monSl : \EMT \to \EMT$ in the Eilenberg-Moore category of $\monT$ such that $\funUuT \monSl = \monS \funUuT$ (i.e. $\funUuT$ commutes with functor, unit and multiplication). An ""extension"" of $\monT$ on $\monS$ is a monad $\monTe : \KlS \to \KlS$ in the Kleisli category of $\monS$ such that $\monTe \funFdS = \funFdS \monT$. In contrast to the weaker notions seen in the next section, these ones will sometimes be called \emph{strong}.

\begin{proposition} \label{prop:bij}
There is a bijective correspondence between "distributive laws" of type $\monT\monS \Rightarrow \monS\monT$, "liftings" of $\monS$ on $\monT$ and "extensions" of $\monT$ on $\monS$.
\end{proposition}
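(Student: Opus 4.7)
The plan is to construct the three correspondences explicitly and then verify that the passages between them are mutually inverse. I will describe only the distributive-law/lifting bijection in detail and sketch the symmetric argument for extensions.

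\textbf{From a distributive law to a lifting.} Given $\dld : \funT\funS \Rightarrow \funS\funT$, I define $\funSl : \EMT \to \EMT$ on objects by sending a $\monT$-algebra $(X, a : \funT X \to X)$ to $(\funS X, \funS a \circ \dld_X)$, and on morphisms by $\funSl(h) = \funS h$. The fact that $(\funS X, \funS a \circ \dld_X)$ is a $\monT$-algebra follows from the axioms $(\muT)$ and $(\etaT)$ by a diagram chase. Naturality of $\dld$ guarantees that $\funS h$ is a morphism of $\monT$-algebras, and that $\funUuT \funSl = \funS \funUuT$ is immediate. The unit and multiplication of the monad $\monSl$ are defined componentwise as $\etaSX$ and $\muSX$, and checking that these are algebra morphisms relies on $(\etaS)$ and $(\muS)$ respectively. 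Monad laws for $\monSl$ follow from those for $\monS$, since $\funUuT$ is faithful.

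\textbf{From a lifting to a distributive law.} Given a lifting $\monSl$, apply $\funSl$ to the free $\monT$-algebra $(\funT X, \muTX)$ to obtain a $\monT$-algebra on $\funS\funT X$, say with structure map $\alpha_X : \funT\funS\funT X \to \funS\funT X$. Then set
\begin{equation}
\dld_X \;=\; \alpha_X \circ \funT\funS\etaTX \;:\; \funT\funS X \longrightarrow \funS\funT X.
\end{equation}
Naturality of $\dld$ follows from naturality of $\funSl$ on free algebras together with naturality of $\etaT$. The four coherence axioms of a distributive law are then verified using: the algebra axioms for $\alpha_X$ (giving $(\muT)$ and $(\etaT)$), and the fact that $\etaSl, \muSl$ are morphisms in $\EMT$ that sit over $\etaS,\muS$ (giving $(\etaS)$ and $(\muS)$).

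\textbf{Mutual inverseness.} Starting from a distributive law, passing to the lifting $\funSl(\funT X, \muTX) = (\funS\funT X, \funS\muTX \circ \dld_{\funT X})$ and recovering $\dld$ yields $\funS\muTX \circ \dld_{\funT X} \circ \funT\funS\etaTX$, which collapses to $\dld_X$ by naturality of $\dld$ and the unit/multiplication law of $\monT$. Conversely, starting from a lifting, the reconstructed $\dld$ determines the algebra structure on every free algebra, and hence on every algebra via the coequalizer presentation $\funT\funT X \rightrightarrows \funT X \to X$; since $\funUuT$ is faithful and reflects the algebra structure on $\funSl(X,a)$, the lifting is recovered.

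\textbf{The extension side.} For $\monTe : \KlS \to \KlS$, I use the symmetric recipe: $\monTe$ agrees with $\monT$ on objects, and sends a Kleisli morphism $f : X \to \funS Y$ to $\dld_Y \circ \funT f : \funT X \to \funS\funT Y$; the axioms $(\etaS)$ and $(\muS)$ ensure functoriality and compatibility with Kleisli composition, while $(\etaT)$ and $(\muT)$ ensure that $\etaT,\muT$ extend to natural transformations in $\KlS$. Conversely, $\dld_X$ is obtained as $\monTe$ applied to the Kleisli morphism $\id_{\funS X} : \funS X \to X$ (i.e.\ $\id : \funS X \to \funS X$ in $\catC$). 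Mutual inverseness is verified just as above.

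\textbf{Main obstacle.} The construction itself is formulaic; the real work is the somewhat lengthy verification that the four axioms of a distributive law correspond exactly to: (i) the algebra axioms on $\funSl(\funT X, \muTX)$, and (ii) $\etaSl,\muSl$ being $\monT$-algebra morphisms. Organising this bookkeeping cleanly, and in particular checking that $\dld$ reconstructed from $\monSl$ (resp.\ $\monTe$) is natural, is the only nontrivial point; everything else reduces to standard diagram chases using the monad laws of $\monS$ and $\monT$.
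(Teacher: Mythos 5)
Your construction is the classical Beck correspondence and it is correct; the paper does not prove this proposition itself but recalls it from Beck's work, and your argument (the lifting $(X,a)\mapsto(\funS X,\funS a\circ\dld_X)$, recovery of $\dld$ from the free algebra $(\funT X,\muTX)$ via precomposition with $\funT\funS\etaTX$, and the symmetric Kleisli-side construction) is exactly the standard one. The only point worth spelling out in a full write-up is that verifying $(\muT)$ in the lifting-to-law direction uses not only associativity of the algebra structure $\alpha_X$ but also that $\funS\muTX$ underlies the algebra morphism $\funSl(\muTX)$ — i.e.\ the condition $\funUuT\funSl=\funS\funUuT$ on morphisms — which is implicit in your setup.
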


Additionally, the existence of a distributive law of type $\monT\monS \Rightarrow \monS\monT$ yields a monad structure on the functor $\funS\funT$ defined by $\monST = (\funS\funT, \etaS\etaT, \muS\muT \circ \funS \dld \funT)$. Note that in order to distinguish this composite monad $\monST$ from the mere distributive-law-type notation $\monS\monT$, characters $\monS$ and $\monT$ are glued together. This framework can be restricted to the case where only one of the involved functors is a monad, with the required commutative diagrams adapting accordingly. A distributive law of type $\monT\funG \Rightarrow \funG\monT$ is a natural transformation of the obvious type such that the $(\muT)$ and $(\etaT)$ diagrams commute. Such distributive laws are also called $\sf{EM}$-laws because they correspond to liftings of $\funG$ on $\monT$ i.e. functors $\funGl : \EMT \to \EMT$ such that $\funUuT \funGl = \funG \funUuT$. A distributive law of type $\funG\monS \Rightarrow \monS\funG$ is a natural transformation of the obvious type such that the $(\muS)$ and $(\etaS)$ diagrams commute. Such distributive laws are also called $\sf{Kl}$-laws because they correspond to extensions of $\funG$ on $\monS$ i.e. functors $\funGe : \KlS \to \KlS$ such that $\funGe \funFdS = \funFdS \funG$.\\

In the case when $\monS$ is the ""powerset monad"" $\monP : \Set \to \Set$, it is a natural requirement to ask that an "extension" of $\monT$ on $\monP$ preserves the order structure obtained by identifying $\KlP$ as the category $""\Rel""$ of sets and relations. We have the following result from Barr \cite{barr1970relational}:

\begin{proposition} \label{prop:canonical}
There is a (necessarily unique) "extension" of $\monT$ on $\monP$ whose functor is a $2$-functor  $\Rel \to \Rel$ if and only if both following facts hold:
\begin{itemize}
\item the functor $\funT$ is ""weakly cartesian@@fun"", meaning that it preserves weak pullbacks, and
\item the natural transformations $\etaT$ and $\muT$ are ""weakly cartesian@@nat"", meaning that their naturality squares are weak pullbacks.
\end{itemize}
\end{proposition}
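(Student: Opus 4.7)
By Proposition~\ref{prop:bij}, an extension of $\monT$ on $\monP$ corresponds to a natural transformation $\funT\funP \Rightarrow \funP\funT$ satisfying the $(\muP)$, $(\etaP)$, $(\muT)$ and $(\etaT)$ coherence diagrams, equivalently to an endofunctor $\funTe : \KlP \to \KlP$ extending $\funT$ (it sends the graph of a function $f$ to the graph of $\funT f$) and endowed with a monad structure extending $\monT$. Since $\KlP$ is isomorphic to $\Rel$ with the hom-order given by inclusion of relations, requiring $\funTe$ to be a $2$-functor is requiring it to be monotone on homs. The plan is to identify the unique possible candidate --- the \emph{Barr extension} --- defined on a relation $R \subseteq X \times Y$ with projections $r_1 : R \to X$ and $r_2 : R \to Y$ by
\[
\funTe R \,:=\, \{\, (\funT r_1(u), \funT r_2(u)) \mid u \in \funT R \,\} \,\subseteq\, \funT X \times \funT Y,
\]
i.e.\ as the image of $\langle \funT r_1, \funT r_2 \rangle : \funT R \to \funT X \times \funT Y$, and then to match each piece of its structure with the corresponding hypothesis.

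For the implication $(\Leftarrow)$, assume $\funT$ preserves weak pullbacks and $\etaT$, $\muT$ are weakly cartesian. The Barr extension is easily checked to be well-defined (image factorisations do not depend on the choice of tabulation of $R$), to agree with $\funT$ on graphs of functions, and to be monotone. The crux is functoriality: the composite $S \circ R$ of two relations is the image of the pullback of their tabulations in $\Set$, so weak pullback preservation by $\funT$ yields $\funTe(S \circ R) = \funTe S \circ \funTe R$. Finally, the $(\etaT)$ and $(\muT)$ diagrams for the associated natural transformation $\funT\funP \Rightarrow \funP\funT$ translate under $\KlP \cong \Rel$ precisely into the statement that the naturality squares of $\etaT$ and $\muT$ are weak pullbacks, endowing $\funTe$ with the required monad extension structure.

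For the implication $(\Rightarrow)$, any $2$-functor extension $\funTe$ is forced on graphs by the extension condition, and then on an arbitrary $R$ by the factorisation of $R$ as the composite in $\Rel$ of $r_2$ with the converse of $r_1$, together with the $2$-adjunction between a function and its converse in $\Rel$ (which determines the value of $\funTe$ on converses of functions once it is fixed on graphs). Hence $\funTe$ must coincide with the Barr extension; reading the forward argument backwards then extracts weak pullback preservation of $\funT$ from functoriality of $\funTe$, and weak cartesianness of $\etaT$ and $\muT$ from the monad extension property. Uniqueness is built into this analysis.

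I expect the main obstacle to be the functoriality step: showing that composition of relations (computed as the image of a pullback in $\Set$) is preserved by $\funTe$ exactly because $\funT$ preserves weak pullbacks. The corresponding calculation for the monad axioms --- that weak cartesianness of $\etaT$ and $\muT$ is what lifts them to natural transformations of $2$-functors on $\Rel$ --- is a finer but entirely analogous bookkeeping.
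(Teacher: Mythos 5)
The paper offers no proof of this statement: it is quoted directly from Barr's work (the sentence preceding it reads ``We have the following result from Barr''), so there is nothing in-paper to compare against. Your sketch reconstructs the classical argument behind that citation, and it is essentially correct. The uniqueness step is exactly right: a $2$-functor preserves adjunctions, in $\Rel$ every function $f$ (qua graph) has its converse $f^{\circ}$ as right adjoint, and right adjoints in a poset-enriched category are unique, so any monotone extension is forced to send $r_1^{\circ}$ to $(\funT r_1)^{\circ}$ and hence to coincide with the Barr extension on $R = r_2 \circ r_1^{\circ}$. Two points carry the real mathematical content and are only asserted in your sketch, so be aware of where the work lies: (i) the Barr extension is always a \emph{lax} functor ($\funTe S \circ \funTe R \subseteq \funTe(S \circ R)$), and the equivalence of strict functoriality with preservation of weak pullbacks by $\funT$ requires an argument in both directions (the forward direction applies functoriality to a cospan of graphs and converses to recover the weak pullback property); (ii) likewise $\etaT$ and $\muT$ are always lax natural with respect to the Barr extension, and strict naturality in $\Rel$ --- which is what the $(\etaT)$ and $(\muT)$ diagrams of the corresponding distributive law amount to under $\KlP \cong \Rel$ --- is equivalent to their naturality squares being weak pullbacks. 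Neither of these is a gap in the sense of a wrong step; they are the standard lemmas your plan correctly identifies as the crux, and filling them in yields a complete proof of the cited result.
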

The unique "extension" of Proposition~\ref{prop:canonical}, along with the corresponding "lifting" and "distributive law", will be called \emph{canonical}.\\

\section{The Weak Framework} \label{sec:weakframework}

In various cases, there simply does not exist any distributive law between two monads. More importantly, the uncomfortable situation when there is no monad structure at all on the composite functor happens quite frequently: examples in the category $\Set$ include $\funP\funP$ \cite{klin2018iterated} and $\funP\funD$ \cite{Varacca03PhD}. Recently, Zwart and Marsden \cite{Zwart_Marsden} provided general theorems that forbid certain distributive laws to exist.

\subsection{Weak Distributive Laws}

Böhm \cite{Bohm2010} and Street \cite{street2009weak} observed that weakening the distributive law axioms in a clever way actually still make $\monS\monT$ almost a monad. 
A particular way of weakening axioms was then picked out by Garner \cite{Gar19}, who proved that the Vietoris functor on $\KHaus$ is almost a lifting of $\monP$ on the ""ultrafilter monad"" $\monB$. Let us be more precise by introducing properly the notions. The cheeky basic idea is to simply drop the axiom that is often causing trouble, namely the $(\etaT)$ diagram.

\begin{definition}[Weak Distributive Law]
A ""weak distributive law"" of type $\monT\monS \Rightarrow \monS\monT$ is a natural transformation $\wdld : \funT\funS \Rightarrow \funS\funT$ such that diagrams $(\muT)$, $(\muS)$ and $(\etaS)$ commute.
\end{definition}

\begin{definition}[Weak Lifting]
A ""weak lifting"" of $\monS$ on $\monT$ is a monad $\monSwl : \EMT \to \EMT$  along with two natural transformations $\piwl : \funS \funUuT \Rightarrow \funUuT \funSwl$, $\iotawl : \funUuT \funSwl \Rightarrow \funS \funUuT$ such that $\piwl \circ \iotawl = 1$ and the following diagrams commute:
\begin{center}
\begin{tikzcd}
\funUuT \funSwl \funSwl \ar[r, "\iotawl \funSwl" above] \ar[d, "\funUuT \muSwl" left] \ar[drr, phantom, "(\iotawl \mu)"] & \funS \funUuT \funSwl \ar[r, "\funS \iotawl" above] & \funS \funS \funUuT \ar[d, "\muS \funUuT" right] & \funS \funS \funUuT \ar[d, "\muS \funUuT" left] \ar[r, "\funS \piwl" above] \ar[drr, phantom, "(\piwl \mu)"] & \funS \funUuT \funSwl \ar[r, "\piwl \funSwl" above] & \funUuT \funSwl \funSwl \ar[d, "\funUuT \muSwl" right] \\
\funUuT \funSwl \ar[rr, "\iotawl" below] & & \funS \funUuT & \funS \funUuT \ar[rr, "\piwl" below] & & \funUuT \funSwl \\
\funUuT \funSwl \ar[rr, "\iotawl" above] & {} & \funS \funUuT & \funS \funUuT \ar[rr, "\piwl" above] & {} & \funUuT \funSwl \\
& \funUuT \ar[ul, "\funUuT \etaSwl" below left] \ar[ur, "\etaS \funUuT" below right] \arrow[u, phantom, "(\iotawl \eta)", near end] & & & \funUuT \ar[ul, "\etaS \funUuT" below left] \ar[ur, "\funUuT \etaSwl" below right] \arrow[u, phantom, "(\piwl \eta)", near end] & 
\end{tikzcd}
\end{center}
\end{definition}

\begin{definition}[Weak Extension]
A ""weak extension"" of $\monT$ on $\monS$ is a functor $\funTwe : \KlS \to \KlS$ along with a natural $\muTwe : \funTwe\funTwe \Rightarrow \funTwe$ such that $\funTwe \funFdS = \funFdS \funT$ and $\muTwe \funFdS = \funFdS \muT$.
\end{definition}

Proposition~\ref{prop:bij} has a counterpart in the weak framework:

\begin{proposition} \label{prop:weakbij}
There is a bijective correspondence between "weak distributive laws" of type $\monT\monS \Rightarrow \monS\monT$, "weak extensions" of $\monT$ on $\monS$ and, whenever idempotents split in the base category $\catC$, "weak liftings" of $\monS$ on $\monT$.
\end{proposition}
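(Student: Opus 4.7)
The plan is to adapt Beck's classical proof of Proposition~\ref{prop:bij}, being careful to track which of the four coherence diagrams each direction consumes. Since a weak distributive law differs from a distributive law only by dropping $(\etaT)$, two effects are to be expected: on the Kleisli side, the composite would-be monad loses its unit axiom and degenerates to a functor-with-multiplication, which is exactly the notion of a weak extension; on the Eilenberg--Moore side, the candidate $\funS\funT$ yields a non-trivial idempotent in place of the identity, whose splitting (when available) produces the weak lifting.

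\textbf{Weak distributive laws $\leftrightarrow$ weak extensions.} Given $\wdld$, define $\funTwe : \KlS \to \KlS$ on objects by $\funTwe X = \funT X$ and on a Kleisli arrow $f : X \to \funS Y$ by $\funTwe f = \wdld_Y \circ \funT f$. The equality $\funTwe \funFdS = \funFdS \funT$ (which in particular yields preservation of Kleisli identities) amounts to $(\etaS)$; functoriality with respect to Kleisli composition is a diagram chase using $(\muS)$ and the naturality of $\wdld$. The multiplication $\muTwe$ is obtained by transporting $\muT$ along $\funFdS$, so $\muTwe \funFdS = \funFdS \muT$ holds by construction, with associativity then relying on $(\muT)$. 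Conversely, from a weak extension $(\funTwe, \muTwe)$, set $\wdld_X$ to be the image under $\funTwe$ of the Kleisli identity on $\funS X$, viewed as a morphism $\funT\funS X \to \funS\funT X$ in $\catC$; the three axioms are then read off the weak extension data. The absent $(\etaT)$ plays no role on either side, yielding the first bijection.

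\textbf{Weak distributive laws $\leftrightarrow$ weak liftings.} The crux is the natural endomorphism
\[ e_X = \funS \mu^{\monT}_X \circ \wdld_{\funT X} \circ \eta^{\monT}_{\funS\funT X} : \funS\funT X \to \funS\funT X, \]
which reduces to the identity precisely when $(\etaT)$ holds. A diagram chase invoking the three remaining axioms shows that $e$ is idempotent in general. Assuming idempotents split in $\catC$, the splitting can be transported into the Eilenberg--Moore category of $\monT$ to produce the underlying functor of $\funSwl$ together with natural transformations $\piwl$, $\iotawl$ satisfying $\piwl \circ \iotawl = 1$. The associative multiplication on $\funS\funT$ supplied by $\wdld$ descends through the splitting to endow $\funSwl$ with unit $\etaSwl$ and multiplication $\muSwl$, and yields the four coherences $(\iotawl\mu), (\piwl\mu), (\iotawl\eta), (\piwl\eta)$ required of a weak lifting. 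The inverse passage recovers $\wdld$ from $(\funSwl, \piwl, \iotawl)$ by chasing $\etaT$ through the splittings.

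\textbf{Main obstacle.} The substantive part is to verify that $e$ really is idempotent, and that the partial monad data on $\funS\funT$ transports through the splitting to the data of a weak lifting exactly; this is where the three remaining coherence diagrams interact non-trivially. Once this is done, mutual inversion in both bijections follows by routine diagram chases. The result is essentially due to Street~\cite{street2009weak} and B\"ohm~\cite{Bohm2010}, with the precise formulation used here appearing in Garner~\cite{Gar19}.
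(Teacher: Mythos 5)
Your proposal is correct and follows essentially the same route as the proof the paper relies on: the paper defers this result to Garner~\cite{Gar19} and records only the lifting-to-law formula $\iotawl\funFuT \circ \funUuT \varepsilon^{\monT} \funSwl \funFuT \circ \funT \piwl\funFuT \circ \funT\funS\etaT$, which is exactly the ``chase $\etaT$ through the splittings'' recovery you describe, while your idempotent $e = \funS\muT \circ \wdld\funT \circ \etaT\funS\funT$ and the Kleisli-side functor-with-multiplication are precisely the ingredients of that argument. The only point to tighten is that to define $\funSwl$ on all of $\EMT$ one must split the idempotent $\funS a \circ \wdld_A \circ \etaT_{\funS A}$ at an arbitrary algebra $(A,a)$ (checking the splitting carries a $\monT$-algebra structure); your $e_X$ is its instance at free algebras only.
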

\begin{proof}
As this result is proved in \cite{Gar19}, we only consider the bit of the proof that will be useful in the remainder of the paper. Let $\monSwl$ be a "weak lifting" of $\monS$ on $\monT$. The corresponding "weak distributive law" is given by \begin{tikzcd}
\funT\funS \ar[r,"\funT\funS\etaT" above] & \funT\funS\funT \ar[r, "\funT \piwl\funFuT" above] & \funT \funUuT \funSwl \funFuT \ar[rr, "\funUuT \varepsilon^{\monT} \funSwl \funFuT" above] & & \funUuT \funSwl \funFuT \ar[r, "\iotawl\funFuT" above] & \funS\funT \end{tikzcd}, where $\varepsilon^{\monT} : \funFuT \funUuT \Rightarrow 1$ is the counit of the adjunction $\funFuT \dashv \funUuT$.
\end{proof}
Similarly to the strong framework, the existence of a "weak distributive law" of type $\monT\monS \Rightarrow \monS\monT$ allows to build a monad structure mixing $\monS$ and $\monT$ from the composite adjunction
\begin{center}
\begin{tikzcd}
\EMSwl \ar[r,yshift=-0.4em] & \EMT \ar[l,yshift=0.6em, "\bot" below] \ar[r, yshift = -0.4em] & \catC \ar[l, yshift = 0.6em, "\bot" below]
\end{tikzcd}
\end{center}

As in the strong framework, this monad will be denoted by $\monST$. Again, characters $\monS$ and $\monT$ are glued together to denote the composite monad. Note that, unlike the strong case, the functor of $\monST$ is \emph{not} $\funS\funT$. We will often abuse notation and identify $""\monST""$ with its underlying functor $\funUuT \funSwl \funFuT$. Note that there are natural transformations $\iotawl^* = \iotawl \funFuT : \monST \Rightarrow \funS\funT$ and $\piwl^* = \piwl \funFuT : \funS\funT \Rightarrow \monST$ such that $\piwl^* \circ \iotawl^* = 1$. We use as much as possible the notation $\monST$ to stress the fact that this monad is a kind of composition of $\monS$ and $\monT$.\\

Again, one can consider the case when one of the monads is a plain functor. A "weak distributive law" of type $\monT \funG \Rightarrow \funG \monT$ only has to make the sole $(\muT)$ diagram commute. If idempotents split in $\catC$, such laws are in bijection with "weak liftings" of $\funG$ on $\monT$ i.e. functors $\funGwl : \EMT \to \EMT$ coming with natural $\piwl : \funG \funUuT \Rightarrow \funUuT \funGwl$, $\iotawl : \funUuT \funGwl \Rightarrow \funG \funUuT$ such that $\piwl \circ \iotawl = 1$. Concerning the other type $\funG \monS \Rightarrow \monS \funG$, we refrain to define a weak notion of distributive law, as it would require $(\muS)$ and $(\etaS)$ to commute, hence bring nothing new in comparison with the strong framework.\\

As idempotents split in $\Set$, Barr's result fits very well into the weak framework:

\begin{proposition} \label{prop:wcanonical}
There is a (unique) "weak extension" of $\monT$ on $\monP$ whose functor is a $2$-functor  $\Rel \to \Rel$ if and only if both following facts hold:
\begin{itemize}
\item the functor $\funT$ is "weakly cartesian@@fun", and
\item the natural transformation $\muT$ is "weakly cartesian@@nat".
\end{itemize}
\end{proposition}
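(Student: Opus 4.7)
The plan is to adapt the proof of Barr's theorem (Proposition~\ref{prop:canonical}) to the weak setting. The key observation is that a weak extension of $\monT$ on $\monP$ consists only of a functor $\funTwe$ together with a natural transformation $\muTwe$ subject to $\funTwe \funFdS = \funFdS \funT$ and $\muTwe \funFdS = \funFdS \muT$; there is no unit $\etaTwe$, and hence no axiom involving $\etaT$ remains to be verified. This is precisely why the hypothesis ``$\etaT$ is weakly cartesian'' from Proposition~\ref{prop:canonical} is absent here, while the other two hypotheses, on $\funT$ and $\muT$ respectively, have to persist.

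For the ``if'' direction, assume $\funT$ preserves weak pullbacks and that $\muT$ is weakly cartesian. The classical Barr relation-lifting construction, sending a relation $R \subseteq X \times Y$ (viewed as a jointly monic span) to the image factorization of the span obtained by applying $\funT$, produces a 2-functor $\funTwe : \Rel \to \Rel$ precisely because $\funT$ preserves weak pullbacks, and by construction $\funTwe \funFdS = \funFdS \funT$. Moreover, a standard lemma states that a natural transformation between weak-pullback-preserving endofunctors of $\Set$ is weakly cartesian if and only if its components, viewed as $\Rel$-morphisms via $\funFdS$, are natural with respect to every morphism of $\Rel$ between the relation liftings of source and target. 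Applied to $\muT$, this immediately yields a natural $\muTwe : \funTwe \funTwe \Rightarrow \funTwe$ satisfying $\muTwe \funFdS = \funFdS \muT$; since no further axioms must be checked, $(\funTwe, \muTwe)$ is a weak extension in the sense of our definition.

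For the ``only if'' direction, suppose we are given a weak extension whose functor is a 2-functor $\Rel \to \Rel$. Barr's classical characterization forces $\funT$ to preserve weak pullbacks, since such a 2-functor restricting to $\funT$ on $\Set$-morphisms exists only under that condition; and the analogous characterization for natural transformations forces $\muT$ to be weakly cartesian, since $\muTwe$ is natural with respect to all $\Rel$-morphisms while restricting to $\muT$ on $\Set$-morphisms. Uniqueness is then immediate: Barr's relation lifting of $\funT$ is unique, and $\muTwe$ is determined by the equation $\muTwe \funFdS = \funFdS \muT$ together with naturality. I do not foresee any real obstacle here: the core work has already been done by Barr, and the novelty is simply the observation that the weak-extension definition has no unit component, which is exactly what permits dropping the $\etaT$ hypothesis.
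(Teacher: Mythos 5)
Your proposal is correct and takes exactly the route the paper intends: the paper gives no explicit proof of Proposition~\ref{prop:wcanonical}, presenting it as the immediate adaptation of Barr's result (Proposition~\ref{prop:canonical}) to the weak framework following Garner~\cite{Gar19}, where the only new observation is the one you make --- a weak extension has no unit component, so the hypothesis that $\etaT$ be weakly cartesian can be dropped while the Barr relation lifting and the standard characterization of weakly cartesian natural transformations handle $\funTwe$ and $\muTwe$ respectively. Nothing further is needed.
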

Whenever it exists, this "weak extension" and the corresponding "weak distributive law" and "weak lifting" will be described as \emph{canonical}.

\subsection{Examples}

We now give three important examples where the weak framework proves useful. In all three cases, there is no canonical distributive law because the unit $\etaT$ is not weakly cartesian. The gap is even bigger concerning (at least) the second and third example, because as said previously, there is no distributive law of type $\monP\monP \Rightarrow \monP\monP$ or $\monD\monP \Rightarrow \monP\monD$ at all.

\begin{example} \label{ex:pb}
The category of algebras for the ultrafilter monad $\EMB$ is isomorphic to the category of compact Hausdorff spaces $""\KHaus""$. There is a canonical weak distributive law of type $\monB\monP \Rightarrow \monP\monB$ whose "weak lifting" is the Vietoris monad on $\KHaus$ and such that $""\monPB""$ is the filter monad on $\Set$. This result is the original motivation of Garner's paper \cite{Gar19}.
\end{example}

Garner also shows in Lemma~$17$ of \cite{Gar19} that there is a canonical "weak distributive law" of type $\monPf \monP \Rightarrow \monP \monPf$, where $\monPf$ is the "finite powerset monad". For symmetry purposes we make use of the variant he also mentions: 

\begin{example} \label{ex:pp}
There is a canonical "weak distributive law" of type $\monP \monP \Rightarrow \monP \monP$ defined by
\begin{equation} \wdld_X(\mathcal{A}) = \left\{B \subseteq X \mid B \subseteq \bigcup \mathcal{A} \text{ and } \forall A \in \mathcal{A}, A \cap B \neq \emptyset \right\} 
\end{equation}
Let us give an expression of the corresponding "weak lifting" of $\monP$ on $\monP$. The category $\EMP$ is isomorphic to the category of complete join semi-lattices. Let $(X,\sqcup)$ be an object. The underlying set of $""\funPwl"" (X,\sqcup)$ is $\Omega = \{A \in \funP X \mid A \text{ is stable under non-empty } \sqcup\}$, and its join is given for every $\mathcal{A} \in \funP \Omega$ by
\begin{equation}
\bigsqcup \mathcal{U} = \{\sqcup \{x_U \mid U \in \mathcal{U}\} \mid \forall U \in \mathcal{U}, x_U \in U\}
\end{equation}
We also have a monad $""\monPP""$ on $\Set$ whose functor maps a set $X$ to the set of all subsets of $X$ closed under non-empty union. On functions, it takes direct images twice.
\end{example}

\begin{remark}
From the point of view of logic, the transformation performed by the $\wdld$ of Example~\ref{ex:pp} amounts to transforming a conjunctive normal form into an equivalent disjunctive normal form. Indeed, consider that $X$ is a set of propositional variables. Seeing $\mathcal{U} \in \funP\funP X$ as a CNF and $\wdld_X(\mathcal{U})$ as a DNF, straightforward computations show that
\begin{equation}
\bigwedge_{U \in \mathcal{U}} \bigvee_{x \in U} x \equiv \bigvee_{V \in \wdld_X(\mathcal{U})} \bigwedge_{x \in V} x
\end{equation}
\end{remark}

\begin{example} \label{ex:pd}
There is a canonical "weak distributive law" of type $\monD \monP \Rightarrow \monP \monD$ defined by
\begin{align}
\wdld_X\left(\sum_i p_i A_i\right) = \left\{ \sum_i p_i \phi_i \mid \forall i, \supp(\phi_i) \subseteq A_i \right\}
\end{align}
where we use the formal sum notation with distinct $A_i$ and positive $p_i$. The corresponding weak lifting \cite{goypetrisan} is the ""convex powerset monad"" $\monPc$ described in detail in \cite{Bon17} and in a slightly different way in \cite{Bon19}. The monad $""\monPD""$ on $\Set$ is the convex sets of distributions monad, denoted by $C$ in \cite{Bon17}.
\end{example}

\subsection{Compositionality}

Results about distributive law composition \cite{cheng2011iterated} can be adapted to the weak framework to a limited extent.

\begin{theorem} \label{theo:compo}
Let $\monT$, $\monS$ be monads on a category $\catC$ in which idempotents split. Let $\funG$ be an endofunctor on $\catC$. Let $\wdld : \monT\monS \Rightarrow \monS\monT$, $\wdls : \monT\funG \Rightarrow \funG\monT$, $\wdlt : \monS\funG \Rightarrow \funG\monS$ be "weak distributive laws". Assume the so-called ""Yang-Baxter"" diagram commutes:
% https://tikzcd.yichuanshen.de/#N4Igdg9gJgpgziAXAbVABwnAlgFyxMJZABgBoBGAXVJADcBDAGwFcYkQAdDgM2bABUuvMAGUhfAOIgAvqXSZc+QinKli1Ok1btxo3YJ6SZckBmx4CRAExqNDFm0SdDelxP3H55pUVVW7Wo7OwgbC7i4inqYKFsrINv409tpOuuEhupGyXoqWKADMFAEOOm6ZHtIaMFAA5vBEoNwAThAAtkhkIDgQSKqaJakcAO5QjFBpUc1tvTTdSDb9KcF8IgAEXCOMCNkgU+2InXOIhYtB+hujOCA0jPQARjCMAAoxPk5NWDUAFlc7e0gnI4AFiSgVKmzg6wi1xAtwez1eeRAH2+vxM-0QIK6PUQAFZQQNlmBwpsoDC4Y8Xt4kSifpMWvsFkd8adwZcoSEZJRpEA
\begin{center}
\begin{tikzcd}
                                                                   & \funS\funT\funG \arrow[r, "\funS \wdls"]  & \funS\funG\funT \arrow[rd, "\wdlt \funT"] &                 \\
\funT\funS\funG \arrow[ru, "\wdld\funG"] \arrow[rd, "\funT\wdlt"'] &                                           &                                           & \funG\funS\funT \\
                                                                   & \funT\funG\funS \arrow[r, "\wdls \funS"'] & \funG\funT\funS \arrow[ru, "\funG\wdld"'] &                
\end{tikzcd}
\end{center}
Then the composite $\wdll = \begin{tikzcd} \monST \funG \ar[r,"\iotawl^* \funG" above] & \funS\funT\funG \ar[r, "\funS\wdls" above] & \funS\funG\funT \ar[r,"\wdlt\funT" above] & \funG\funS\funT \ar[r, "\funG\piwl^*" above] & \funG\monST \end{tikzcd}$ is a "weak distributive law" of type $\monST \funG \Rightarrow \funG \monST$. Moreover, if $\wdls$ and $\wdlt$ are strong, then $\wdll$ is strong.
\end{theorem}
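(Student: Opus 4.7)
The plan is to verify that $\wdll$, which is natural by construction as a composite of natural transformations, satisfies the $(\muST)$ axiom required of a weak distributive law of the mixed type $\monST \funG \Rightarrow \funG \monST$, and in the strong case additionally the $(\etaST)$ axiom. My approach is to reduce each axiom to a diagram chase on the plain composite functor $\funS\funT$ via the section--retraction pair $\iotawl^* : \monST \Rightarrow \funS\funT$ and $\piwl^* : \funS\funT \Rightarrow \monST$, and then close that chase by a classical Cheng-style computation powered by the Yang--Baxter hexagon.

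To transport the problem to $\funS\funT$, I first use the $(\iotawl\mu)$ and $(\piwl\mu)$ coherences of the weak lifting to express $\muST$ in terms of $\iotawl^*, \piwl^*$ and the operations $\muS, \muT, \wdld$; schematically,
\begin{equation*}
\iotawl^* \circ \muST \;=\; \muS \muT \,\circ\, \funS \wdld \funT \,\circ\, (\iotawl^* \iotawl^*),
\end{equation*}
together with a dual statement factoring the relevant $\piwl^*$ on the right. Combined with $\piwl^* \circ \iotawl^* = \id$, these identities allow me to rewrite the target equation $\wdll \circ \muST\funG = \funG\muST \circ \wdll\monST \circ \monST\wdll$ as an equality between two natural transformations of type $\funS\funT\funS\funT\funG \Rightarrow \funG\funS\funT$ expressed entirely in terms of $\wdld, \wdls, \wdlt$ and the monad operations of $\monS, \monT$.

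Once on $\funS\funT$, the argument becomes a classical Cheng composition. The two paths propagate $\funG$ leftwards across $\funS\funT\funS\funT$, crossing each $\funT$ using $\wdls$ and each $\funS$ using $\wdlt$; their agreement is witnessed by the Yang--Baxter hexagon assumed in the hypothesis, together with the $(\muT)$ axiom of $\wdls$, the $(\muS)$ axiom of $\wdlt$, and the $(\muS), (\muT)$ axioms of $\wdld$. In the strong case, the $(\etaST)$ axiom reduces, via $\etaST = \piwl^* \circ \etaS \etaT$ and the identity $\iotawl^* \circ \etaST = \etaS\etaT$ (from the $(\iotawl\eta)$ coherence), to a short naturality computation that appeals precisely to the strong $(\etaT)$ axiom of $\wdls$ and the strong $(\etaS)$ axiom of $\wdlt$; the missing unit axiom of $\wdld$ plays no role here, explaining why $\wdld$ being only weak is compatible with the conclusion. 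The main obstacle is the bookkeeping in the transport step: one must apply the $(\iotawl\mu)$ and $(\piwl\mu)$ coherences carefully and remember that $\iotawl^* \circ \piwl^*$ is only a (split) idempotent, not the identity, so $\monST$-data does not simply identify with $\funS\funT$-data. Once the transport is in place, the Yang--Baxter closure proceeds exactly as in Cheng's strong theorem.
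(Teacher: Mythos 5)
Your plan matches the paper's proof: the two transport identities you state (expressing $\iotawl^*\circ\muST$ and $\muST$-precomposed-with-$\piwl^*\piwl^*$ in terms of $\muS\muT\circ\funS\wdld\funT$) are exactly the paper's auxiliary diagrams, proved there via the $(\iotawl\mu)$/$(\piwl\mu)$ coherences together with the counit formula for $\wdld$ and $\muST$, and the remaining inner square is closed by the same Cheng-style Yang--Baxter computation, with the strong unit case handled identically via $(\iotawl\eta)$, $(\piwl\eta)$ and the strong unit axioms of $\wdls,\wdlt$. The proposal is correct and takes essentially the same route.
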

A similar result holds if types are $\wdls : \funG\monT \Rightarrow \monT\funG$ and $\wdlt : \funG\monS \Rightarrow \monS\funG$. However, we believe that the results of \cite{cheng2011iterated} can not be adapted to the case where $\funG$ is replaced with a monad $\monR$, because the $(\muR)$ diagram for $\monST$ is unlikely to commute.

\begin{example} \label{ex:compo}
Let $\catC = \Set$, $\monT = \monS = \monP$ and let $\funG = \funM$ be the "machine functor" $2 \times (-)^A$ with respect to a finite alphabet $A$. Consider the "weak distributive law" $\wdld : \monP \monP \Rightarrow \monP \monP$ given in Example~\ref{ex:pp} and the renowned (strong) "distributive laws" $\sigma,\tau : \monP \funM \Rightarrow \funM \monP$ defined as
\begin{align} 
& \sigma_X(S) = \left( \bigwedge_{(o,f) \in S} o , a \mapsto \bigcup_{(o,f) \in S} f(a) \right)
& \tau_X(S) = \left( \bigvee_{(o,f) \in S} o, a \mapsto \bigcup_{(o,f) \in S} f(a) \right)
\end{align}
These laws satisfy the "Yang-Baxter" condition. Hence, there is a (strong) "distributive law" $\lambda : \monPP(2 \times -^A) \Rightarrow 2 \times \monPP^A$ given for any $U \in \monPP X$ by
\begin{equation}
\lambda_X(U) = \left( \bigvee_{S \in U} \bigwedge_{(o,f) \in S} o, a \mapsto \unions\left(\{\{f(a) \mid (o,f) \in S\} \mid S \in U\}\right)\right)
\end{equation}
where $""\unions"" = \piwl^* : \funP\funP \Rightarrow \monPP$ denotes closure of a set of sets under non-empty unions.
\end{example}

\section{Generalized Determinization} \label{sec:gendet}

In~\cite{klin2015coalgebraic}, Klin and Rot perform a powerset construction on "alternating automata", turning them into non-deterministic automata. To this purpose, a first version of the paper used to introduce a transformation that was wrongly identified as a "distributive law". Spotting the mistake sparked the chase of a correct distributive law of type $\monP \monP \Rightarrow \monP \monP$ -- chase which ended brutally with the result of Klin and Salamanca~\cite{klin2018iterated} that there can be no such law. In the corrected version of~\cite{klin2015coalgebraic}, the authors introduce an other natural transformation of the same type which happens to be the $\wdld$ of Example~\ref{ex:pp}. Thanks to this $\wdld$ they manage to correctly turn "alternating automata" into equivalent non-deterministic automata. However, this construction is not standard and it is still unclear how it would relate to the generalized determinization of coalgebras that heavily relies on "distributive laws". In this section, we show that the powerset construction of~\cite{klin2015coalgebraic} is an instance of a mild extension of the generalized determinization procedure described in~\cite{goypetrisan} with respect to "weak distributive laws".

\subsection{Determinization Procedure} \label{subsec:proc}

In our recent paper~\cite{goypetrisan}, the generalized determinization process for coalgebras is adapted to case of monad-monad "weak distributive laws". It is immediate that this generalized determinization stills works fine with monad-functor laws. Moreover, compositionality plays it role and modelling systems involving more than two monads and functors does not raise any difficulties. To keep things simple, and having in mind that we aim at modelling alternating automata, we will focus on a special case involving two monads and one functor. The following result can be easily inferred from the constructions of Lemma~$5.1$ of~\cite{goypetrisan}.

\begin{proposition} \label{prop:gendet}
Let $\monT$, $\monS$ be monads on a category $\catC$ in which idempotents split. Let $\funG$ be an endofunctor on $\catC$. Let $\wdld : \monT\monS \Rightarrow \monS\monT$, $\wdls : \monT\funG \Rightarrow \funG\monT$ be "weak distributive laws" and $\monSwl,\funGwl$ the corresponding "weak liftings". Then we have the following determinization diagram.
\begin{center}
\begin{tikzcd} \label{diag:gendet}
\Coalg{\funG\funS\funT} \ar[r,"\widehat{\funFuT}" above] \ar[d] & \Coalg{\funGwl\funSwl} \ar[r,"\widehat{\funUuT}" above] \ar[d] & \Coalg{\funG\funS} \ar[d] \\
\catC \ar[r,"\funFuT" below] & \EMT \ar[r,"\funUuT" below] & \catC
\end{tikzcd}
\end{center}
\end{proposition}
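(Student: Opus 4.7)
The plan is to build $\widehat{\funFuT}$ and $\widehat{\funUuT}$ as liftings of $\funFuT$ and $\funUuT$ to the indicated coalgebra categories, so that the two squares commute with the forgetful functors by construction. The main data are the natural transformations coming with the two weak liftings: $\piwl : \funS\funUuT \Rightarrow \funUuT\funSwl$ and $\iotawl : \funUuT\funSwl \Rightarrow \funS\funUuT$ for $\monSwl$, and $\piwls : \funG\funUuT \Rightarrow \funUuT\funGwl$ and $\iotawls : \funUuT\funGwl \Rightarrow \funG\funUuT$ for $\funGwl$, each pair satisfying $\piwl\circ\iotawl = 1$.

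The right-hand square is easier. Given a coalgebra $d : A \to \funGwl\funSwl A$ in $\EMT$, I would send it to the $\funG\funS$-coalgebra
\begin{equation*}
\widehat{\funUuT}(d) \;=\; \funG\iotawl A \,\circ\, \iotawls\funSwl A \,\circ\, \funUuT d \;:\; \funUuT A \to \funG\funS\funUuT A.
\end{equation*}
Functoriality and commutativity of the right square are immediate from naturality of the $\iotawl$'s.

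For $\widehat{\funFuT}$, given $c : X \to \funG\funS\funT X$, the proposal is to lift it to the $\EMT$-coalgebra on $\funFuT X = (\funT X,\muTX)$ whose underlying morphism $c^\# : \funT X \to \funUuT\funGwl\funSwl\funFuT X$ is
\begin{equation*}
\funT X \xrightarrow{\funT c} \funT\funG\funS\funT X \xrightarrow{\wdls\funS\funT X} \funG\funT\funS\funT X \xrightarrow{\funG\wdld\funT X} \funG\funS\funT\funT X \xrightarrow{\funG\funS\muTX} \funG\funS\funT X
\end{equation*}
post-composed with $\piwls\funSwl\funFuT X \circ \funG\piwl\funFuT X$. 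The carrier of $\widehat{\funFuT}(c)$ is $\funT X = \funUuT\funFuT X$, so the left square commutes on objects, and functoriality is routine.

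The main obstacle is showing that $c^\#$ is a $\monT$-algebra morphism into the algebra structure carried by $\funUuT\funGwl\funSwl\funFuT X$. The strategy is a diagram chase comparing $c^\# \circ \muT(\funT X)$ with the composite obtained by post-applying the target structure map to $\funT c^\#$; one invokes, in order, naturality of $\wdls$ and $\wdld$, their $(\muT)$ coherence diagrams, associativity of $\muT$, and the $(\piwl\mu)$ and $(\iotawl\mu)$ coherence diagrams of the two weak liftings to match $\piwl$'s and $\iotawl$'s in the correct places. This is the straightforward extension to an arbitrary $\funG$ of the argument already carried out in Lemma~5.1 of \cite{goypetrisan} for the case $\funG = \Id$; the extra factor $\funG$ propagates by naturality of $\wdls$ and the weak-lifting equations of $\funGwl$.
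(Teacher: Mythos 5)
Your construction of $\widehat{\funUuT}$ is exactly the paper's, and your $c^\#$ agrees with the paper's coalgebra structure (this is precisely the content of Lemma~\ref{lem:same}), so the overall shape of the argument is right. The difference is in how the left square is handled, and this is where your sketch has a gap. The paper defines $\widehat{\funFuT}(X,c)$ by first forming the morphism $\piwls_{\funSwl\funFuT X}\circ\funG\piwl_{\funFuT X}\circ c \colon X \to \funUuT\funGwl\funSwl\funFuT X$ in $\catC$ and then taking its adjoint transpose along $\funFuT\dashv\funUuT$; the transpose is a morphism in $\EMT$ by definition of the adjunction, so the fact that the coalgebra structure is a $\monT$-algebra morphism comes for free and no diagram chase is needed. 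You instead write down the transposed formula directly and propose to verify the algebra-morphism property by hand. That verification can be made to work, but not with the ingredients you list: the $(\piwl\mu)$ and $(\iotawl\mu)$ diagrams concern the multiplication $\muSwl\colon\funSwl\funSwl\Rightarrow\funSwl$ of the weakly lifted monad, and $c^\#$ contains only a single occurrence of $\funSwl$ and of $\funGwl$, so these diagrams cannot enter the computation at all. What is actually needed is the compatibility of $\piwl,\iotawl$ with the $\monT$-algebra structures, which is encoded in the expression $\wdld = \iotawl\funFuT\circ\funUuT\varepsilon^{\monT}\funSwl\funFuT\circ\funT\piwl\funFuT\circ\funT\funS\etaT$ recalled in the proof of Proposition~\ref{prop:weakbij} (and its analogue for $\wdls$), together with the triangle identities of the adjunction --- exactly the ingredients the paper uses to prove Lemma~\ref{lem:same}. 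So either import that formula to complete your chase, or, more economically, define the lifting by adjoint transposition as the paper does and relegate the explicit formula to a separate lemma.
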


Moreover, it is interesting (and was not already remarked in~\cite{goypetrisan}) to note that in the above diagram, the functor $\Coalg{\funG\funS\funT} \to \Coalg{\funG\funS}$ can be expressed with the very same formula as in the case of "distributive laws".

\begin{lemma} \label{lem:same}
Let $(X,c)$ be a $\funG\funS\funT$-coalgebra. Then
\begin{equation}
\begin{tikzcd}[row sep=small] \label{eq:samething}
\widehat{\funUuT} \widehat{\funFuT}(X,c) = \funT X \ar[r,"\funT c" above] & \funT \funG \funS \funT X \ar[r, "\wdls_{\funS\funT X}" above] & \funG \funT \funS \funT X \ar[r, "\funG \wdld_{\funT X}" above] & \funG \funS \funT \funT X \ar[r, "\funG\funS\muTX" above] & \funG \funS \funT X
\end{tikzcd}
\end{equation}
\end{lemma}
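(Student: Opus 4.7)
The plan is to unfold $\widehat{\funUuT}\widehat{\funFuT}(X,c)$ explicitly using the two weak liftings $\funSwl$ and $\funGwl$, and then to recognise the weak distributive laws $\wdld$ and $\wdls$ inside the resulting composite by invoking the explicit formula of Proposition~\ref{prop:weakbij}.

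Concretely, I first describe $\widehat{\funFuT}(X,c)$. By the adjunction $\funFuT \dashv \funUuT$, a $\funGwl\funSwl$-coalgebra structure on $\funFuT X$ corresponds to a map $\tilde{c} : X \to \funUuT\funGwl\funSwl\funFuT X$ in $\catC$, and the construction of Lemma~$5.1$ of~\cite{goypetrisan} takes $\tilde{c} = \piwls_{\funSwl\funFuT X} \circ \funG\piwl_{\funFuT X} \circ c$, pushing $c$ through the two sections $\piwl$ and $\piwls$. Its transpose $\hat{c}$ has underlying map $\funUuT\hat{c} = \funUuT\varepsilon^{\monT}_{\funGwl\funSwl\funFuT X} \circ \funT\tilde{c}$, since the algebra structure on $\funGwl\funSwl\funFuT X$ is precisely $\funUuT$ applied to the counit $\varepsilon^{\monT}$ of the adjunction.

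Second, applying $\widehat{\funUuT}$ post-composes with the retractions $\iotawls_{\funSwl\funFuT X}$ and $\funG\iotawl_{\funFuT X}$ to bring the codomain back to $\funG\funS\funT X$. This yields an explicit description of the structure map of $\widehat{\funUuT}\widehat{\funFuT}(X,c)$ as a long composite whose first arrow is $\funT c$. The goal is to show that the remaining arrows together equal $\funG\funS\muTX \circ \funG\wdld_{\funT X} \circ \wdls_{\funS\funT X}$. I would substitute in the formulas of Proposition~\ref{prop:weakbij} for $\wdld_{\funT X}$ and $\wdls_{\funS\funT X}$ (each of which is itself a four-step zigzag through $\piwl, \varepsilon^{\monT}, \iotawl$ or $\piwls, \varepsilon^{\monT}, \iotawls$), and then verify that both explicit composites coincide by a diagram chase using naturality of $\piwl, \iotawl, \piwls, \iotawls, \eta^{\monT}, \varepsilon^{\monT}$ together with the section--retraction identities $\piwl \circ \iotawl = 1$, $\piwls \circ \iotawls = 1$ and the triangle identity $\funUuT\varepsilon^{\monT} \circ \eta^{\monT}\funUuT = 1$.

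I expect the main obstacle to be the bookkeeping across these several layers of natural transformations. The crucial simplifications are that the $\piwl$--$\iotawl$ and $\piwls$--$\iotawls$ pairs introduced when substituting the formulas for $\wdld$ and $\wdls$ cancel by the section--retraction identity, and that the trailing $\funG\funS\muTX$ arises by absorbing the extra copy of $\funT$ introduced by the $\funT\funG\eta^{\monT}$ and $\funT\funS\eta^{\monT}$ factors of these formulas, via naturality of $\varepsilon^{\monT}$ (or equivalently via the axioms $(\iotawl\mu)$ and $(\piwl\mu)$ for the weak lifting). The net effect is the same computation as in the strong case: only the identity $\piwl\circ\iotawl=1$ (and not $\iotawl\circ\piwl=1$) is ever invoked, so the argument adapts without genuine change.
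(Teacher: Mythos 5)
Your proposal is correct and follows essentially the same route as the paper: unfold $\widehat{\funUuT}\widehat{\funFuT}(X,c)$ via the adjoint-transpose construction of Proposition~\ref{prop:gendet}, substitute the zigzag expressions for $\wdld$ and $\wdls$ from the proof of Proposition~\ref{prop:weakbij}, and close the resulting diagram by naturality together with the triangle identity $\funUuT\varepsilon^{\monT}\circ\etaT\funUuT=1$. The paper organizes the chase so that the extra $\piwl$/$\iotawl$ occurrences are handled by naturality rather than by explicit section--retraction cancellation, but this is only a difference in bookkeeping, not in substance.
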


\subsection{Application to Alternating Automata} \label{subsec:detaa}

The spirit of "alternating automata" originates in \cite{chandra1981lj}. It is well-suited to make systems deal with $\forall/\exists$ alternation --- a leitmotiv in logic. In this section, we give a coalgebraic modelling to alternating automata and provide their generalized determinization. This construction could have been equally performed on probabilistic automata to retrieve their determinized belief-state transformer. For this reason, notations and examples are kept close to the ones of~\cite{Bon17}.

\begin{definition}[Alternating automaton]
An ""alternating automaton"" is a tuple $(X,A,F,\to)$ where $X$ is a set of states, $A$ is a set of action labels, $F \subseteq X$ is a set of final states and $\to \subseteq X \times A \times \funP(X)$ is the transition relation. We will denote $(x,a,U) \in \to$ by $x \overset{a}{\to} U$.
\end{definition}

As often happens, the structure of "alternating automata" can be rephrased coalgebraically.

\begin{proposition}
An "alternating automaton" $(X,A,F,\to)$ can be identified with a coalgebra $c = \langle o, t \rangle : X \to 2 \times (\funP\funP X)^A$ on $\Set$, where $o(x) = 1$ iff $x \in F$ and $U \in t(x)(a)$ iff $x \overset{a}{\to} U$.
\end{proposition}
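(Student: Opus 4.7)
The statement is essentially a routine reformulation, so the plan is simply to exhibit the claimed bijection between the two descriptions and check it is well-defined both ways.

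Starting from an alternating automaton $(X,A,F,\to)$, I would define the two components of the coalgebra directly: set $o : X \to 2$ to be the characteristic function of $F$, so that $o(x) = 1$ iff $x \in F$, and define $t : X \to (\funP\funP X)^A$ by
\begin{equation}
t(x)(a) = \{ U \subseteq X \mid x \overset{a}{\to} U \}.
\end{equation}
Since every set of subsets of $X$ is an element of $\funP\funP X$, the map $t(x)$ is a well-defined function $A \to \funP\funP X$, and pairing gives $c = \langle o, t \rangle : X \to 2 \times (\funP\funP X)^A$.

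For the converse direction, I would start from an arbitrary coalgebra $c = \langle o, t \rangle$ of this signature and define $F := o^{-1}(\{1\}) \subseteq X$ together with the relation $\to \,\subseteq X \times A \times \funP(X)$ given by $x \overset{a}{\to} U$ iff $U \in t(x)(a)$. These definitions produce a tuple $(X,A,F,\to)$ of the right form.

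Finally I would observe that the two constructions are mutually inverse: the round trip alternating-automaton $\to$ coalgebra $\to$ alternating-automaton recovers $F$ because $x \in F \Leftrightarrow o(x) = 1 \Leftrightarrow x \in o^{-1}(\{1\})$, and recovers $\to$ because $x \overset{a}{\to} U \Leftrightarrow U \in t(x)(a) \Leftrightarrow x \overset{a}{\to} U$; the other round trip is identical by symmetry. There is no genuine obstacle here — the only thing to be mildly careful about is that the outer $\funP$ in $\funP\funP X$ is the powerset of $\funP X$, so no constraint is placed on which families of subsets may appear as $t(x)(a)$, matching the fact that $\to$ is an arbitrary subset of $X \times A \times \funP X$.
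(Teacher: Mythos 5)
Your proposal is correct and matches the paper's treatment: the paper states this identification without proof, treating it as the evident currying of the transition relation $\to\,\subseteq X \times A \times \funP(X)$ into $t : X \to (\funP\funP X)^A$ together with the characteristic function of $F$, which is exactly the bijection you spell out. Nothing further is needed.
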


The "alternating automaton" of Figure~\ref{fig:aa} is directly inspired from the example given in \cite{Bon17} in order to highlight the vivid similarities between determinization of a probabilistic automaton into a belief-state transformer and determinization of alternating automaton into a non-deterministic automaton.

The "language" map $\semAA{-} : X \to 2^{A^*}$ of an "alternating automaton" $c = \langle o,t \rangle$ is
\begin{align}
& \semAA{x}(\varepsilon) = o(x)
& \semAA{x}(aw) = \bigvee_{U \in t_a(x)} \bigwedge_{y \in U} \semAA{y}(w)
\end{align}

\begin{figure}[h] \centering
% https://tikzcd.yichuanshen.de/#N4Igdg9gJgpgziAXAbVABwnAlgFyxMJZABgBoAmAXVJADcBDAGwFcYkQAPAfWJAF9S6TLnyEUARgrU6TVu36CQGbHgJFyUmgxZtEnLuIVCVoogGZNMnfIHGRaiaXHTtcvUaXDVY5BuIvZXX0zD2V7Hw0AFgDrPW5yUK9TR2itQJtFMO91UjMYtxBEkwdkAFZLVyCATx4i8KIANid8oLrslAB2Un802JAakNtPYp8mvN6CtuTkLtSrApqEoazprqoJ6oMpkq7nDYy7dpncloPh+pQmucqbaRgoAHN4IlAAMwAnCABbJDIQHAgSEk8yC9A8H2+QJoAKQGhAjCwYCCUAgOBw93Bnx+iDhMMQFhB7DBQwh2L+eMiNAARjAwFAkGY-jc9MTFKSkJT-oDEOV4Yjkaj0fSSViGdDubyEUj2Ci0RiRZD8eKkE0+dK9LKhZjFaq8V0QDS6QymekWdrsfq8arDfT8X8pQK5cK2aKccrEAAOam020AWkZ+zNNEY9BpjAACkkHCB3lgHgALHDmpBermw71GxD++38mWC+UuxWpq3B3Ma-POt6ugCc7vEfxtxsDIDBwdDMAjUbEMbjieTiHrdeBDrzTv7tbTA7hjazAcJZoV2PEuO54gJI-LY8XQM5ePEvOZLf7+6HpfVIE1BarivEu9X1p9SGzzdb8PbnZG7FjCaT24HutXdcywvCtjwAncz0dLUMz9Rk-zXOt9RnOdD1Za8l0tQCYKbNUoKvEB2SnOtOQ3ECt0oPggA
\begin{tikzcd}
                                              &                                 & x_3                                                &                                            &  &                                                             &                                          & y_3                                                \\
                                              & {} \arrow[ru, dotted]           &                                                    &                                            &  &                                                             & {} \arrow[ru, dotted]                    & {} \arrow[u, dotted]                               \\
x_0 \arrow[r, "a"] \arrow[ru, "a", bend left] & {} \arrow[r, dotted]            & x_1 \arrow[r, "a"] \arrow[d, "a"', bend right]     & {} \arrow[luu, dotted] \arrow[ldd, dotted] &  & y_0 \arrow[ru, "a"', bend left] \arrow[rd, "a", bend right] &                                          & y_1 \arrow[u, "a"] \arrow[d, "a"', bend right]     \\
                                              &                                 & {} \arrow[u, dotted, bend right] \arrow[d, dotted] &                                            &  &                                                             & {} \arrow[rd, dotted] \arrow[ru, dotted] & {} \arrow[d, dotted] \arrow[u, dotted, bend right] \\
                                              & {} \arrow[r, dotted, bend left] & x_2 \arrow[l, "a", bend left]                      &                                            &  &                                                             & {} \arrow[r, dotted, bend left]          & y_2 \arrow[l, "a", bend left]                     
\end{tikzcd}
\caption{An alternating automaton $c_0$ on the alphabet $A = \{a\}$ with no final states. Solid lines denote existential transitions and dotted lines denote universal transitions. In other terms, any $U \in t_a(x)$ gives rise to one solid line starting from $x$, from the end of which starts one dotted line per element $t \in U$. For instance, $t_a(x_1) = \{\{x_1,x_2\},\{x_2,x_3\}\}$.}
\label{fig:aa}
\end{figure}
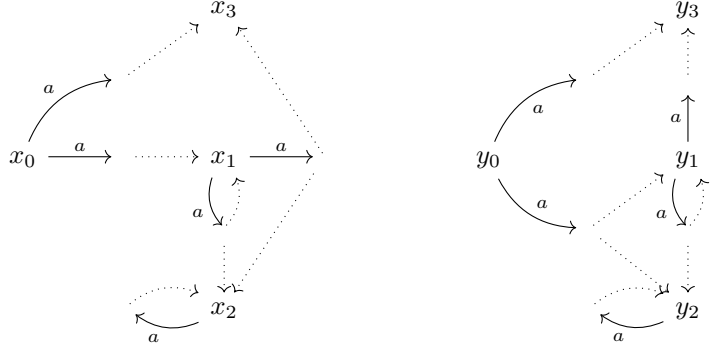

Because of the lack of "distributive law" of type $\monP \monP \Rightarrow \monP \monP$, in the past few years this modelling has been neglected in favor of workarounds e.g. using the category of posets \cite{bertrand2018coalgebraic}. Using the canonical "weak distributive law" $\wdld : \monP \monP \Rightarrow \monP \monP$ and the determinization procedure of Section~\ref{subsec:proc}, we can take a fresh look at determinization of "alternating automata". Note that performing once a powerset construction on $c$ yields a non-deterministic automaton $c^+$. We will still call $c^+$ the \emph{determinized} of $c$.\\

Let us apply Proposition~\ref{prop:gendet} with $\catC = \Set$, $\monS = \monT = \monP$, $\funG = \funM$. Consider $\wdld$ the canonical "weak distributive law" of type $\monP \monP \Rightarrow \monP \monP$ defined in Example~\ref{ex:pp} and $\wdls, \wdlt$ the two "distributive laws" of type $\monP \funM \Rightarrow \funM \monP$ defined in Example~\ref{ex:compo}. Let $c : X \to \funM\funP\funP X$ be an "alternating automaton". Remind that $c = \langle o, t\rangle$ with $o : X \to 2$ and $t : X \to (\funP\funP X)^A$. We use the convenient notation $t_a(x) = t(x)(a)$ and hereby determinize $c$ with respect to the inner powerset. Thanks to Lemma~\ref{lem:same}, this amounts to computing
\begin{equation}
\begin{tikzcd}
c^+ = \funP X \ar[r, "\funP c" above] & \funP \funM \funP \funP X \ar[r, "\wdls_{\funP\funP X}" above] & \funM \funP \funP \funP X \ar[r, "\funM \wdld_{\funP X}" above] & \funM \funP \funP \funP X \ar[r, "\funM\funP \muPX" above] & \funM \funP \funP X
\end{tikzcd}
\end{equation}
Then $c^+ = \langle o^+,t^+\rangle$ maps $U$ to $\left(\bigwedge_{x \in U} o(x), a \mapsto \left\{\bigcup_{x \in U} K_x \mid \forall x \in U, K_x \in \unions(t_a(x))\right\} \right)$.
As noted in~\cite{klin2018iterated}, this non-standard determinization of "alternating automata" may not be efficient because many states are added on account of $\unions$. However, our generic framework also comes with systematic compatibility of up-to techniques that will fix this issue. Let $""\semNDA{-}"" : \funP X \to 2^{A^*}$ be the usual non-deterministic semantics of the determinized $c^+ : \funP X \to \funM \funP \funP X$. As remarked previously in~\cite{klin2018iterated}:

\begin{proposition} \label{prop:goodsemantics}
For all $(w,U) \in A^* \times \funP X$, $\semNDA{U}(w) = \bigwedge_{x \in U} \semAA{x}(w)$.\\
In particular, equation $\semNDA{-} \circ \etaPX = \semAA{-}$ holds.
\end{proposition}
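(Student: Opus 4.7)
The plan is to proceed by induction on the length of $w \in A^*$, after unfolding the non-deterministic semantics of $c^+$ and the explicit formulas for $o^+$ and $t^+$ derived just above.

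For the base case $w = \varepsilon$, I unfold $\semNDA{U}(\varepsilon) = o^+(U)$ and use the formula $o^+(U) = \bigwedge_{x \in U} o(x) = \bigwedge_{x \in U} \semAA{x}(\varepsilon)$, which is immediate.

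For the inductive step $w = aw'$, I expand
\begin{equation*}
\semNDA{U}(aw') = \bigvee_{V \in t^+_a(U)} \semNDA{V}(w') = \bigvee_{(K_x)_{x \in U}} \bigwedge_{x \in U} \bigwedge_{y \in K_x} \semAA{y}(w'),
\end{equation*}
where the inner $\bigvee$ ranges over families $(K_x)_{x \in U}$ with $K_x \in \unions(t_a(x))$, and I have applied the induction hypothesis to each $\semNDA{V}(w')$ for $V = \bigcup_{x \in U} K_x$, together with the identity $\bigwedge_{y \in V} = \bigwedge_{x \in U}\bigwedge_{y \in K_x}$. The goal is to identify this with $\bigwedge_{x \in U} \bigvee_{U' \in t_a(x)} \bigwedge_{y \in U'} \semAA{y}(w') = \bigwedge_{x \in U} \semAA{x}(aw')$.

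Two manipulations complete the argument. First, I show that the $\unions$-closure is irrelevant for the suprema: if $K \in \unions(t_a(x))$ is written as $K = \bigcup_i K_i$ with $K_i \in t_a(x)$, then $\bigwedge_{y \in K} \semAA{y}(w') \leq \bigwedge_{y \in K_i} \semAA{y}(w')$ for each $i$, so
\begin{equation*}
\bigvee_{K \in \unions(t_a(x))} \bigwedge_{y \in K} \semAA{y}(w') \;=\; \bigvee_{K \in t_a(x)} \bigwedge_{y \in K} \semAA{y}(w').
\end{equation*}
Second, I use distributivity of the Boolean algebra $\{0,1\}$, which allows swapping a finite $\bigwedge$ of $\bigvee$'s into a $\bigvee$ of $\bigwedge$'s via choice functions:
\begin{equation*}
\bigwedge_{x \in U} \bigvee_{K_x \in t_a(x)} \phi(x,K_x) \;=\; \bigvee_{(K_x)_x} \bigwedge_{x \in U} \phi(x,K_x).
\end{equation*}
Combining these two observations with the expression above yields exactly $\bigwedge_{x \in U} \semAA{x}(aw')$, closing the induction. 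The ``in particular'' statement follows by setting $U = \{x\} = \etaPX(x)$, for which the conjunction is a single term.

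The only mildly delicate point is the first manipulation, ensuring that introducing $\unions$ in the determinization step does not spuriously enlarge the sup. This is the reason Proposition~\ref{prop:goodsemantics} holds despite the union-closure built into $t^+$, and is precisely the phenomenon already observed in~\cite{klin2018iterated}; the rest is routine Boolean algebra combined with the inductive unfolding.
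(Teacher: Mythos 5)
Your proof is correct and follows essentially the same route as the paper's: induction on $w$, with the inductive step reducing to the same Boolean identity between $\bigvee_{V \in t_a^+(U)}\bigwedge_{x \in V}$ and $\bigwedge_{y \in U}\bigvee_{W \in t_a(y)}\bigwedge_{x \in W}$. You verify that identity via two lattice manipulations (absorption of the $\unions$-closure into the supremum, then distributivity of $\{0,1\}$ via choice functions), whereas the paper checks directly when each side equals $1$ by constructing explicit witnesses --- the same two facts in different packaging.
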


\begin{remark} \label{rem:chain}
Note that by applying multiple times Proposition~\ref{prop:gendet}-like results, one can actually chain determinizations. For instance, provided a "weak distributive law" of type $\monS \funG \Rightarrow \funG \monS$, one can further determinize $c^+ : \funT X \Rightarrow \funG \funS \funT X$ by applying Proposition~\ref{prop:gendet} by formally replacing $(\monS,\monT,\wdld,\wdls)$ with $(\mon{I},\monS,1,\wdlt)$ where $\mon{I}$ is the identity monad. In the case of an "alternating automaton" with $n$ states and the $\wdlt$ of Example~\ref{ex:compo}, this amounts to performing an additional \emph{standard} powerset construction, resulting in a standard deterministic automaton with $2^{2^n}$ states.
\end{remark}

\begin{figure} \centering
% https://tikzcd.yichuanshen.de/#N4Igdg9gJgpgziAXAbVABwnAlgFyxMJZARgBoAGAXVJADcBDAGwFcYkQAdD4ADwH1yXAL4ghpdJlz5CKcqWLU6TVuy68+AZmGjxIDNjwEiZBTQYs2iTt37FtYiQelEATPMXmVVtbdL8tHCIOepKGMiSkLh7KltbqZPwu9rr6UkaykdEWqjZ8bv7JjmnhblFmMTnxfnnVAUEpoc4oACwUWV5xAJ4ChSFO6cga7uXZ3tzddTpFYUStpkqjXXxk3UmBU33FRACswwsdat0rNRO9qTMtmSMH4yea9oowUADm8ESgAGYAThAAtkhyEA4CBIMj7WL0EA0Rj0ABGMEYAAVGukQF8sM8ABY4KEgeFgKBIAC0GnIwW+fwBNGBSDc4PYkPJP3+iEBNMQQ3pVkhNHxhI5ZN0FJZdPZrS5IEZQuZtOpIMQuwlPJAMPhSJRMjRGOxuL5xNJTMpiFF8oAbNcIbqYASkAbpUbxezFaqEcj+prGDAPjiLQyNsKkI6zb7ubiXer3ex0VicYaWUHA9CIBA0EQ5B8mHAYIpw26tlZozqQ5L-TLEAB2OVIACcxalnzLlaB8oAHHWw3DXRqo9qfXjrfySYKG0am+ziIDPJbeQPbcOQAHELXm6CwVO-XGa1Wl0mU65y+nM9noZ2I-mtTHcevQ5udyvEBP2ye1XmLhedUJKEIgA
\begin{tikzcd}
              & \{x_0\} \arrow[ld, "a"', bend right] \arrow[d, "a"] \arrow[rd, "a", bend left]                 &                   &         & \{y_0\} \arrow[d, "a"] \arrow[ld, "a"', bend right] \arrow[rd, "a", bend left]                 &                   \\
\{x_3\}       & \{x_1\} \arrow[d, "a"] \arrow[ld, "a"', bend right] \arrow[rd, "a", bend left]                 & {\{x_1,x_3\}}     & \{y_3\} & {\{y_1,y_2\}} \arrow[d, "a"] \arrow["a"', loop, distance=2em, in=215, out=145] \arrow[r, "a"'] & {\{y_1,y_2,y_3\}} \\
{\{x_2,x_3\}} & {\{x_1,x_2\}} \arrow[l, "a"] \arrow[r, "a"'] \arrow["a"', loop, distance=2em, in=305, out=235] & {\{x_1,x_2,x_3\}} &         & {\{y_2,y_3\}}                                                                                  &                  
\end{tikzcd}
\caption{A portion of the non-deterministic automaton $c_0^+$ obtained by determinizing once the alternating automaton $c_0$ of Figure~\ref{fig:aa}.}
\label{fig:aadet}
\end{figure}
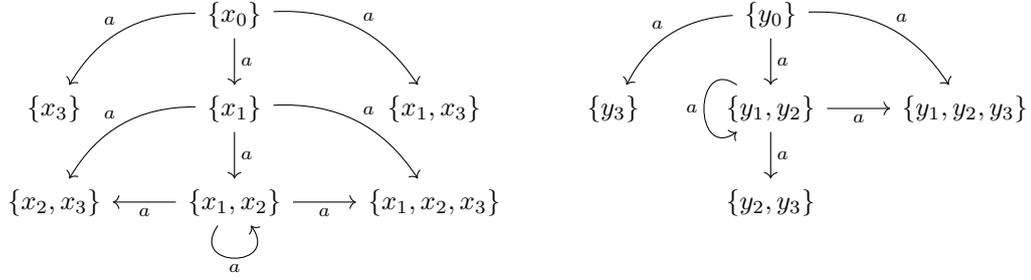

\section{Up-To Techniques} \label{sec:upto}

This section is dedicated to derive and use abstract results about up-to techniques in the case of "weak distributive laws". Such techniques have already been obtained in some cases where there is no "distributive law" by redoing manually all the standard proofs, see e.g. \cite{Bon17} for the "convex powerset monad". By identifying the presence of a "weak distributive law", we are able to easily adapt the known results concerning up-to techniques.

\subsection{Context and Congruence Closure}

Let $\funT,\funF$ be endofunctors on $\Set$. We consider a system given by an $\funF$-coalgebra $\xi : X \to \funF X$. For any relation $R \subseteq  X \times X$, let
\begin{equation}
\sf{Rel}(\funF)_{\xi}(R) = \{(u,v) \in X \times X \mid \exists t \in \funF R, (\xi(u),\xi(v)) = (\funF\pi_1(t),\funF\pi_2(t))\}
\end{equation}
where $\pi_1,\pi_2 : R \to X$ are the canonical projections. A relation $R$ is a ""bisimulation"" if $R \subseteq \sf{Rel}(\funF)_{\xi}(R)$. ""Bisimilarity"" is the greatest fixpoint $\bisim_\xi$ of $\sf{Rel}(\funF)_{\xi}$ on the lattice of relations. The coinduction principle states that in order to prove $x \bisim_\xi y$, it is sufficient to produce a bisimulation $R$ that contains the pair $(x,y)$. However, bisimulations may be hard to compute or even too large to be computed. In the last years, a great interest has been given to the theory of up-to techniques. Such techniques allow to prove bisimilarity of two states $x \bisim_\xi y$ by exhibiting a relation $R$ such that $(x,y) \in R$ and $R$ is a bisimulation \emph{up-to} some other operator on the lattice $\funP(X \times X)$. More precisely, let $\sf{Tch} : \funP(X \times X) \to \funP(X \times X)$ be a monotone function called ""up-to technique"". The technique $\sf{Tch}$ is ""sound"" (with respect to $\xi$) if for all $R \subseteq X \times X$,
\begin{equation}
R \subseteq (\sf{Rel}(\funF)_{\xi} \circ \sf{Tch})(R) \Rightarrow R \subseteq \bisim_\xi
\end{equation}
"Sound" techniques are not closed under composition, which led the community to introduce the subclass of "compatible" techniques \cite{pous2007complete}. The function $\sf{Tch}$ is $\sf{Rel}(\funF)_{\xi}$-""compatible"" if $\sf{Tch} \circ \sf{Rel}(\funF)_{\xi} \subseteq \sf{Rel}(\funF)_{\xi} \circ \sf{Tch}$. "Compatibility" entails "soundness" and enjoys very good compositional properties. A standard approach to derive "compatible" up-to techniques is to identify a so-called "bialgebra". We can e.g. use Theorem~$4$ of \cite{rot2017enhanced}:

\begin{proposition} \label{prop:bialgebra}
Let $\funF, \funT$ be $\Set$-functors, $\lambda : \funT\funF \Rightarrow \funF\funT$ be a natural transformation, $(X,\alpha)$ a $\funT$-algebra and $(X,\xi)$ an $\funF$-coalgebra. 
Contextual closure $\sf{Ctx}_\alpha$ is defined by
\begin{equation}
\sf{Ctx}_\alpha(R) = \{(\alpha \circ \funT\pi_1(t), \alpha \circ \funT\pi_2(t)) \mid t \in \funT R\}
\end{equation}
Congruence closure is the coupling of equivalence closure and context closure, formally:
\begin{equation}
\sf{Cgr}_\alpha(R) = \bigcup_{n \geq 0} \left( \sf{Tra} \cup \sf{Sym} \cup \sf{Ctx}_\alpha \cup \sf{Rfl} \right)^n
\end{equation}
If $\funF \alpha \circ \lambda_X \circ \funT\xi = \xi \circ \alpha$, meaning that $(X,\alpha,\xi)$ is a $\lambda$-""bialgebra"", $\sf{Ctx}_\alpha$ is $\sf{Rel}(\funF)_\xi$-compatible. 
Under the additional assumption that $\funF$ preserves weak pullbacks, $\sf{Cgr}_\alpha$ is $\sf{Rel}(\funF)_\xi$-compatible. 
\end{proposition}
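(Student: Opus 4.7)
The plan is to tackle the two claims in turn. For the context part I would verify the compatibility inclusion $\sf{Ctx}_\alpha \circ \sf{Rel}(\funF)_\xi \subseteq \sf{Rel}(\funF)_\xi \circ \sf{Ctx}_\alpha$ pointwise via a single diagram chase that invokes the bialgebra equation exactly once. The congruence case then follows from standard closure properties of $\sf{Rel}(\funF)_\xi$-compatible maps, with the weak-pullback preservation hypothesis taking care of the remaining constructors.

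Concretely, fix $R \subseteq X \times X$, set $S = \sf{Rel}(\funF)_\xi(R)$ with projections $\pi_1^S, \pi_2^S : S \to X$, and start from an arbitrary pair $(u,v) \in \sf{Ctx}_\alpha(S)$. By definition this gives some $t \in \funT S$ with $u = \alpha(\funT\pi_1^S\, t)$ and $v = \alpha(\funT\pi_2^S\, t)$. Unpacking $S$ together with the axiom of choice produces a map $\phi : S \to \funF R$ satisfying $\xi \circ \pi_i^S = \funF\pi_i^R \circ \phi$ for $i = 1,2$, where $\pi_i^R$ are the projections of $R$. Writing $\chi : \funT R \to \sf{Ctx}_\alpha R$ for the obvious surjection $s \mapsto (\alpha\funT\pi_1^R s, \alpha\funT\pi_2^R s)$, with projections $\pi_i^C : \sf{Ctx}_\alpha R \to X$ satisfying $\pi_i^C \circ \chi = \alpha \circ \funT\pi_i^R$, the candidate witness that $(u,v) \in \sf{Rel}(\funF)_\xi(\sf{Ctx}_\alpha R)$ is $w = \funF\chi(\lambda_R(\funT\phi(t))) \in \funF(\sf{Ctx}_\alpha R)$. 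Checking that $\funF\pi_i^C(w) = \xi(u)$ (resp. $\xi(v)$) reduces to the identity $\funF\pi_i^C \circ \funF\chi = \funF(\alpha \circ \funT\pi_i^R)$, naturality of $\lambda$ applied to $\pi_i^R$, the defining property of $\phi$, and a single use of the bialgebra equation $\funF\alpha \circ \lambda_X \circ \funT\xi = \xi \circ \alpha$.

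For the congruence case, the weak-pullback preservation hypothesis is precisely what ensures that $\sf{Rel}(\funF)_\xi$ behaves as a relation lifting compatible with identity, converse and relational composition, yielding $\sf{Rel}(\funF)_\xi$-compatibility of $\sf{Rfl}$, $\sf{Sym}$ and $\sf{Tra}$; combined with the context part this gives four $\sf{Rel}(\funF)_\xi$-compatible constructors. I would then appeal to the standard lattice-theoretic fact that the class of $\sf{Rel}(\funF)_\xi$-compatible monotone maps on $\funP(X \times X)$ is closed under unions, composition and suprema of monotone chains. Each finite iterate $(\sf{Tra}\cup\sf{Sym}\cup\sf{Ctx}_\alpha\cup\sf{Rfl})^n$ is therefore compatible, and passing to the countable union yields compatibility of $\sf{Cgr}_\alpha$. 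The main obstacle is squarely in the context step, specifically in setting up $\phi$ cleanly via choice and chasing $\funF\pi_i^C(w)$ through the naturality square of $\lambda$ so that the bialgebra identity lands on the nose; the congruence case is then bookkeeping on top of that chase plus the standard relator properties.
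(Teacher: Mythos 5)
Your argument is correct, but there is nothing in the paper to compare it against: the authors do not prove Proposition~\ref{prop:bialgebra} at all, they import it verbatim as Theorem~4 of the cited reference on enhanced coinduction, and the appendix contains no proof of it. What you have written is essentially the standard argument from that literature, and it checks out. The core computation is sound: for $(u,v)\in\sf{Ctx}_\alpha(\sf{Rel}(\funF)_\xi(R))$ witnessed by $t\in\funT S$, your element $w=\funF\chi(\lambda_R(\funT\phi(t)))$ has the right type, and the chain $\funF\pi_i^C(w)=\funF(\alpha\circ\funT\pi_i^R)(\lambda_R(\funT\phi t))=\funF\alpha(\lambda_X(\funT(\funF\pi_i^R\circ\phi)(t)))=(\funF\alpha\circ\lambda_X\circ\funT\xi)(\funT\pi_i^S t)=\xi(\alpha(\funT\pi_i^S t))$ uses exactly naturality of $\lambda$ at $\pi_i^R$, the choice-defined $\phi$, and one application of the bialgebra equation, as you say. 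For the congruence half, your appeal to the relator properties is the right one: weak-pullback preservation gives $\mathsf{Rel}(\funF)(R)\circ\mathsf{Rel}(\funF)(S)\subseteq\mathsf{Rel}(\funF)(R\circ S)$, which (together with the always-true preservation of converse and of the diagonal) yields compatibility of $\sf{Tra}$, $\sf{Sym}$ and $\sf{Rfl}$; the class of $\sf{Rel}(\funF)_\xi$-compatible monotone maps is closed under composition and arbitrary pointwise unions (no chain condition is actually needed), so each iterate and then $\sf{Cgr}_\alpha$ itself is compatible. The only cosmetic caveat is that the paper never pins down $\sf{Tra}$, $\sf{Sym}$, $\sf{Rfl}$ formally, so you are implicitly fixing them as one-step relational composition, converse and diagonal; with that reading your proof is complete.
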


We place ourselves the situation of Proposition~\ref{prop:gendet}: $\monS, \monT$ are monads on $\Set$, $\funG$ is a $\Set$-functor and there are two "weak distributive laws" $\wdld : \monT \monS \Rightarrow \monS \monT$, $\wdls : \monT \funG \Rightarrow \funG \monT$. Let $(X,c)$ be a $\funG\funS\funT$-coalgebra. In accordance with Lemma~\ref{lem:same}, let the determinized $c$ be $c^+ = \funG \funS \muTX \circ \funG \wdld_{\funT X} \circ \wdls_{\funS\funT X} \circ \funT c : \funT X \to \funG\funS\funT X$. In the case of "distributive laws", the proof that $c^+$ is part of a "bialgebra" would use only multiplication diagrams. Hence, the result still holds in the weak case:

\begin{lemma} \label{lem:ourbialgebra}
The triple $(\funT X, \muTX, c^+)$ is a $\funG\wdld \circ \wdls\funS$-bialgebra.
\end{lemma}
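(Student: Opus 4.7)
The plan is to verify the bialgebra condition of Proposition~\ref{prop:bialgebra} for $(\funT X,\muTX,c^+)$ with respect to $\lambda = \funG\wdld \circ \wdls\funS$, i.e.\
\[
\funG\funS\muTX \circ \funG\wdld_{\funT X} \circ \wdls_{\funS\funT X} \circ \funT c^+ \;=\; c^+ \circ \muTX.
\]
The argument is a diagram chase that relies only on naturality squares for $\wdld$ and $\wdls$, on associativity of $\muT$, and on the $(\mu^\monT)$ coherence axioms of the two "weak distributive laws". Since no unit axiom and no $(\mu^\monS)$ axiom are invoked, the reasoning works verbatim in the weak setting.

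For the right-hand side I would unfold $c^+$ and use naturality of $\muT$ at $c$ to rewrite $\funT c \circ \muTX$ as $\muT_{\funG\funS\funT X} \circ \funT\funT c$. For the left-hand side I would expand $\funT c^+$ via functoriality and then slide the inner $\muTX$ leftwards by three successive naturality rewrites: naturality of $\wdls$ at $\funS\muTX$, naturality of $\wdld$ at $\muTX$ (wrapped by $\funG$), and naturality of $\wdls$ at $\wdld_{\funT X}$. The two adjacent multiplications $\funG\funS\muTX \circ \funG\funS\funT\muTX$ that now sit at the front collapse via monad associativity into $\funG\funS\muTX \circ \funG\funS\muT_{\funT X}$, which sets up the expression so that the $(\mu^\monT)$ axiom of $\wdld$ (at $\funT X$, wrapped by $\funG$) rewrites $\funG\funS\muT_{\funT X} \circ \funG\wdld_{\funT\funT X} \circ \funG\funT\wdld_{\funT X}$ as $\funG\wdld_{\funT X} \circ \funG\muT_{\funS\funT X}$. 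A final application of the $(\mu^\monT)$ axiom of $\wdls$ (at $\funS\funT X$) rewrites $\funG\muT_{\funS\funT X} \circ \wdls_{\funT\funS\funT X} \circ \funT\wdls_{\funS\funT X}$ as $\wdls_{\funS\funT X} \circ \muT_{\funG\funS\funT X}$, which matches the right-hand side exactly.

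The only real obstacle is bookkeeping: there are many $\funT$'s floating around and the same symbol $\muT$ appears at several different objects, so the challenge lies in tracking which component is used at each step rather than locating any subtle insight. As the text remarks just above the lemma, the whole proof is purely multiplicative --- which is precisely why the bialgebra condition survives the passage from distributive laws to weak ones.
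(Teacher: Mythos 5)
Your proof is correct and follows essentially the same route as the paper's: both arguments unfold $c^+$ on each side and then chase through naturality of $\wdls$, $\wdld$ and $\muT$, associativity of $\muT$, and the two $(\mu^{\monT})$ pentagons of the weak distributive laws, never touching a unit or $(\mu^{\monS})$ axiom. The paper presents this as a single commutative diagram while you present it as an equational rewriting chain, but the ingredients and their order of use are the same.
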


%Note that replacing $\funG$ with the identity functor yields that $(\funT X, \muTX, c^+)$ is a $\wdld$-bialgebra with $c^+ : \funT X \to \funS \funT X$.

Proposition~\ref{prop:bialgebra} and Lemma~\ref{lem:ourbialgebra} together yield

\begin{theorem} \label{theo:ourcompatible}
\begin{itemize}
\item Contextual closure $\sf{Ctx}_{\muTX}$ is $\sf{Rel}(\funS)_{c^+}$-compatible.
\item If $\funG\funS$ preserves weak pullbacks, then $\sf{Cgr}_{\muTX}$ is $\sf{Rel}(\funS)_{c^+}$-compatible.
\end{itemize}
\end{theorem}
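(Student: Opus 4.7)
The plan is to derive the theorem as an immediate application of Proposition~\ref{prop:bialgebra} combined with Lemma~\ref{lem:ourbialgebra}. The main observation is that once the bialgebraic structure on $(\funT X, \muTX, c^+)$ is in place, the abstract compatibility machinery applies without any further modification, regardless of whether the underlying distributive laws $\wdld, \wdls$ are strong or weak.

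Concretely, I would instantiate Proposition~\ref{prop:bialgebra} as follows: the coalgebra functor is taken to be $\funG\funS$, the natural transformation is $\lambda := \funG\wdld \circ \wdls\funS : \funT\funG\funS \Rightarrow \funG\funS\funT$, the $\funT$-algebra is the free one $(\funT X, \muTX)$, and the $\funG\funS$-coalgebra is $(\funT X, c^+)$. The bialgebra condition $\funG\funS\muTX \circ \lambda_{\funT X} \circ \funT c^+ = c^+ \circ \muTX$ that the proposition requires is precisely the content of Lemma~\ref{lem:ourbialgebra}. Proposition~\ref{prop:bialgebra} then delivers the first item verbatim: $\sf{Ctx}_{\muTX}$ is $\sf{Rel}(\funG\funS)_{c^+}$-compatible.

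For the second item, the additional assumption required by Proposition~\ref{prop:bialgebra} is that the coalgebra functor preserves weak pullbacks, which is exactly our hypothesis on $\funG\funS$. Under this assumption, the proposition gives that $\sf{Cgr}_{\muTX}$ is $\sf{Rel}(\funG\funS)_{c^+}$-compatible, which is the second item.

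I do not anticipate any significant obstacle, since all the substantive work has been packaged inside Lemma~\ref{lem:ourbialgebra}. The verification of that lemma only invokes the multiplication diagrams $(\muT)$ and $(\muS)$ of $\wdld$ together with the $(\muT)$ diagram of $\wdls$, all of which are preserved when passing from the strong to the weak framework. The unit axiom $(\etaT)$ dropped in the definition of a weak distributive law plays no role in the argument, which is the conceptual reason why weak distributive laws behave just as well as strong ones for the bialgebraic derivation of compatible up-to techniques.
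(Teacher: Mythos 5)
Your proposal is correct and is exactly the paper's proof: the theorem is stated there as an immediate consequence of Proposition~\ref{prop:bialgebra} applied to the $\funG\wdld \circ \wdls\funS$-bialgebra $(\funT X, \muTX, c^+)$ furnished by Lemma~\ref{lem:ourbialgebra}, with the coalgebra functor instantiated as $\funG\funS$. The only discrepancy is notational: what you (correctly) obtain is $\sf{Rel}(\funG\funS)_{c^+}$-compatibility, whereas the statement writes $\sf{Rel}(\funS)_{c^+}$; since $c^+$ is a $\funG\funS$-coalgebra, the $\funS$ there is evidently shorthand (or a slip) for the full coalgebra functor $\funG\funS$, consistent with the weak-pullback hypothesis being imposed on $\funG\funS$.
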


Note that Proposition~\ref{prop:wcanonical}, in which $\monS = \monP$, plays a prominent role in obtaining concrete "weak distributive laws". Indeed, all three Examples~\ref{ex:pb},~\ref{ex:pp} and~\ref{ex:pd} are instances of canonical laws. Thus, it is interesting to note what happens to $\sf{Rel}(\funS)_{c^+}$ if we replace $\monS$ with $\monP$. Straightforward computations show that we retrieve the classical Milner-Park bisimulations of transition systems:
\begin{align}
\sf{Rel}(\funP)_{c^+}(R) =  \{(u,v) & \in \funT X \times \funT X \text{ such that } \\
& \forall u' \in c^+(u), \exists v' \in c^+(v) \text{ such that } (u',v') \in R \\
& \forall v' \in c^+(\phi), \exists u' \in c^+(u) \text{ such that } (u',v') \in R \}
\end{align}

\subsection{Application to Alternating Automata} \label{subsec:appaa}

Consider the concrete case of "alternating automata" with the data presented in Section~\ref{subsec:detaa}. We call the corresponding context closure ""union closure"" because of the expression:
\begin{equation}
\sf{Ctx}_{\muPX}(R) = \left\{ \left(\bigcup_{i\in I} A_i,\bigcup_{i\in I} B_i\right) \mid \forall i \in I, (A_i,B_i) \in R \right\}
\end{equation}

Both $\funM$ and $\funP$ preserve weak pullbacks, see e.g. Proposition~$4.2.6$ in \cite{jacobs2017introduction}. Using Theorem~\ref{theo:ourcompatible}, we get that

\begin{proposition} \label{prop:aacompat}
"Union closure" and its corresponding congruence are "compatible" with respect to "alternating automata" determinized once.
\end{proposition}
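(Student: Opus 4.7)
The plan is to directly instantiate Theorem~\ref{theo:ourcompatible} in the concrete setting laid out in Section~\ref{subsec:detaa}: $\catC = \Set$, $\monT = \monS = \monP$, $\funG = \funM$, with the "weak distributive law" $\wdld : \monP\monP \Rightarrow \monP\monP$ of Example~\ref{ex:pp} and one of the "distributive laws" $\wdls : \monP\funM \Rightarrow \funM\monP$ of Example~\ref{ex:compo}. The determinized automaton $c^+ : \funP X \to \funM\funP\funP X$ is precisely the coalgebra to which the theorem applies, and unfolding the definition of contextual closure with $\alpha = \muPX$ (set-theoretic union) recovers exactly the expression of "union closure" displayed just above the statement.

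From the first bullet of Theorem~\ref{theo:ourcompatible} one then reads off "compatibility" of "union closure" with no further work. For the congruence version, the only remaining hypothesis to verify is that the composite $\funG\funS = \funM\funP$ preserves weak pullbacks. Since preservation of weak pullbacks is closed under composition of functors, this reduces to checking $\funM$ and $\funP$ separately. For $\funP$ this is the classical fact recalled in Proposition~$4.2.6$ of \cite{jacobs2017introduction}. For $\funM = 2 \times (-)^A$, it suffices to observe that finite products and exponentials by a constant set are genuine limits, hence preserve all limits and in particular weak pullbacks.

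There is no real technical obstacle here: the proposition is essentially a bookkeeping corollary of Theorem~\ref{theo:ourcompatible} together with the standard weak-pullback preservation of the powerset functor. The content of the statement lies not in the proof itself but in the observation that the "weak distributive law" machinery automatically delivers the "compatibility" of an up-to technique for "alternating automata" which, before the present abstract account, would have had to be reverified by hand.
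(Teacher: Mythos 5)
Your proposal is correct and follows essentially the same route as the paper: the paper's proof is exactly the instantiation of Theorem~\ref{theo:ourcompatible} with $\monS = \monT = \monP$, $\funG = \funM$, $\alpha = \muPX$, together with the observation that both $\funM$ and $\funP$ preserve weak pullbacks (citing Proposition~$4.2.6$ of \cite{jacobs2017introduction}). One small imprecision: weak pullbacks are not limits, so ``preserves all limits, in particular weak pullbacks'' is not literally a valid inference for $\funM = 2\times(-)^A$ --- the correct justification is the standard fact that polynomial functors preserve weak pullbacks (or, concretely, that $(-)^A$ preserves pullbacks and surjections in $\Set$), which is what the cited proposition provides.
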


\begin{example}
Consider the "alternating automaton" $c_0$ of Figure~\ref{fig:aa} and its determinized $c_0^+$ pictured in Figure~\ref{fig:aadet}. Following the steps of~\cite{Bon17}, we consider various bisimulations on $c_0^+$. First, one can see that $\{x_2\} \bisim_{c_0^+} \{y_2\}$ because $(\{x_2\},\{y_2\})$ is a "bisimulation". Indeed, there is exactly one arrow going out of these states, and this is a loop. Similarly, $(\{x_3\},\{y_3\})$ is a "bisimulation" because there is no arrow going out of these states. Now let us try to prove that $\{x_0\} \bisim_{c_0^+} \{y_0\}$. By exploring every possible transition, one can easily see that a bisimulation relating $\{x_0\}$ and $\{y_0\}$ of minimal cardinality is
\begin{align}
R = \{ & (\{x_0\},\{y_0\}), (\{x_1\},\{y_1,y_2\}), (\{x_3\},\{y_3\}), (\{x_1,x_3\},\{y_1,y_2,y_3\}), \\
& (\{x_1,x_2\},\{y_1,y_2\}),(\{x_2,x_3\},\{y_2,y_3\}),(\{x_1,x_2,x_3\},\{y_1,y_2,y_3\})  \}
\end{align}
\end{example}
However, the following smaller relation is a "bisimulation" up-to congruence witnessing the fact that $\{x_0\} \bisim_{c_0^+} \{y_0\}$:
\begin{equation}
R_0 = \{ (\{x_0\},\{y_0\}),(\{x_1\},\{y_1,y_2\}),(\{x_2\},\{y_2\}),(\{x_3\},\{y_3\}) \}
\end{equation}
The only non-trivial verifications concern the pair $(\{x_1\},\{y_1,y_2\})$:
\begin{center}
% https://tikzcd.yichuanshen.de/#N4Igdg9gJgpgziAXAbVABwnAlgFyxMJZABgBpiBdUkANwEMAbAVxiRAB13gAPAfQEZOAXxBDS6TLnyEUZAExVajFm07AAngNKa5w0eJAZseAkX7lF9Zq0Qcufc313sRYicelnSC6lZW21TXMdPTdDSRMZZDkLX2UbOx4BUIMjKVMUAGZYpWtVe14YvkyU93SomJ9c-0Sg7ULS8I8M5GyqvwTAwvqSl30yyKIAFhyO-KTBPrC0wZQR9vjxupCp1IjPFABWUcWAgsdu4saZjeRthby9jS0dHtDFGCgAc3giUAAzACcIAFskMhAOAgSBi1QSdH6IC+vyQ5kBwMQ2TBbAhYWhf0QAKBsOoAAsYHQoGxIGBWNQGFhSWwoHQ4PiieS6AAjGAMAAK6wyIEp2FgIDilxAACVeMRIeiQdRsYjyZSEjS6Y9+SB8YTiQQyYC6FgGOqqYyWezOTJuWBeZqxns4O9gABhJ6fIS8YAAY1FAD0ANRCAAUIuIAEpxd8MSN4UhtsjbKiDBLEAA2KUIgDsApqMY+IaQYeliZVBKJthJmopVNsCvpystwtFyoYzNZHOaJp5WD5aKziEj0tT+bVRY1dbl1NplbTnXY1rtDqdro93r9oqDBsbxrYrfbsc7AA4k7CAdWM1DOwBOPeIfhww-BmGIXfhxBnvuF8CD2VlkAVpXjtj+usNo1m3XM021YDtb34LEEUvPECz1Eth3LUdvyjOwp3tR1nTdYgvV9f1lxAetDSbcpgPNUQKCEIA
\begin{tikzcd}
\{x_1\} \arrow[r, "a"] \arrow[dd, "R_0" description, no head, dashed] & {\{x_1,x_2\}} \arrow[dd, "\sf{Cgr}_{c_0^+}(R_0)" description, no head, dashed] & \{x_1\} \arrow[r, "a"] \arrow[dd, "R_0" description, no head, dashed] & {\{x_2,x_3\}} \arrow[dd, "\sf{Cgr}_{c_0^+}(R_0)" description, no head, dashed] & \{x_1\} \arrow[r, "a"] \arrow[dd, "R_0" description, no head, dashed] & {\{x_1,x_2,x_3\}} \arrow[dd, "\sf{Cgr}_{c_0^+}(R_0)" description, no head, dashed] \\
                                                                      &                                                                                &                                                                       &                                                                                &                                                                       &                                                                                    \\
{\{y_1,y_2\}} \arrow[r, "a"]                                          & {\{y_1,y_2\}}                                                                  & {\{y_1,y_2\}} \arrow[r, "a"]                                          & {\{y_2,y_3\}}                                                                  & {\{y_1,y_2\}} \arrow[r, "a"]                                          & {\{y_1,y_2,y_3\}}                                                                 
\end{tikzcd}
\end{center}

%\begin{figure}
%\caption{Bisimulations on the determinized $c_0^+$. The bisimulation $R$ is pictured in dotted lines on the left, the bisimulation up-to union closure $R_0$ is pictured in dashed lines on the right.}
%\end{figure}

\begin{remark}
For finite systems, the congruence obtained from "union closure" is the same thing as the up-to technique presented in~\cite{bonchi2013checking}. Note however that in \emph{loc. cit.} the authors use compatibility of congruence closure to compute bisimulations on deterministic automata i.e. $\funM$-coalgebras, whereas we use it to compute Milner-Park bisimulations on non-deterministic automata i.e. $\funM\funP$-coalgebras. In particular, in the setting of \emph{loc. cit.}, bisimilarity and behavioural equivalence coincide because there is a final $\funM$-coalgebra --- whereas Milner-Park bisimilarity strictly entails behavioural equivalence on $\funM\funP$-coalgebras.
\end{remark}

\subsection{Application to Probabilistic Automata} \label{subsec:apppa}

Let the functor $-^A : \Set \to \Set$ be the second component of the "machine functor" $\funM$. Consider the "weak distributive law" $\wdld : \monD \monP \Rightarrow \monP \monD$ of Example~\ref{ex:pd} along with the following "distributive law" $\wdls : \monD(-^A) \Rightarrow \monD(-)^A$:
\begin{equation}
\wdls_X(\Phi)(a)(x) = \sum_{f \in X^A, f(a) = x} \Phi(f)
\end{equation}
Note that given a "weak distributive law" of type $\monT \monS \Rightarrow \monS \monT$ and using the terminology of \cite{Bon17}, $\funSwl$ is a $(\monST,\funS)$-quasi-lax lifting because of the equality $\monST = \funUuT \funSwl \funFuT$ and the monic $\iota : \funUuT \funSwl \Rightarrow \funS \funUuT$. Using this observation, and going through the construction of Section~$6$ in~\cite{Bon17}, one can see that it amounts exactly to the left half of the generalized determinization of our Proposition~\ref{prop:gendet}.
%For example, the functor $\mathcal{T}_{\sf{conv}}$ of~\cite{Bon17} is the same as the $c \mapsto \funG\piwl\funFuT \circ c$ in Equation~\ref{eq:thisisconv} above.
Hence, determinizing a probabilistic automaton $c : X \to (\funP \funD X)^A$ into a $c^+ : \funD X \to (\funP \funD X)^A$ yields the same belief-state transformer as in~\cite{Bon17}. As this fact was already remarked in~\cite{goypetrisan} --- modulo the set of labels $A$ --- we are not going to detail this more. Back to up-to techniques, note that context closure with respect to the free $\monD$-algebra is ""convex hull"":
\begin{align}
\sf{Ctx}_{\muDX}(R) & = \left\{\left( \sum_{i \in I} p_i \phi_i, \sum_{i\in I} p_i \psi_i \right) \mid I \text{ finite }, \sum_{i\in I} p_i = 1, (\phi_i,\psi_i) \in R, p_i \in (0,1] \right\}
\end{align}
%\sf{Ctx}_{\muDX}(R) & = \left\{\left( \sum_{1 \leq i \leq n} p_i \phi_i, \sum_{1 \leq i \leq n} p_i \psi_i \right) \mid n \in \mathbb{N}, \sum_{1 \leq i \leq n} p_i = 1, p_i \in (0,1], (\phi_i,\psi_i) \in R\right\}
% & = \left\{\left( \sum_{(\phi,\psi) \in R} \Theta(\phi,\psi) \phi, \sum_{(\phi,\psi) \in R} \Theta(\phi,\psi) \psi \right) \mid \Theta \in \funD R\right\} \\
Both $(-)^A$ and $\funP$ preserve weak pullbacks~\cite{jacobs2017introduction}, hence Theorem~\ref{theo:ourcompatible} yields a result already stated in~\cite{Bon17}:

\begin{proposition} \label{prop:pacompat}
"Convex hull" and its corresponding congruence are "compatible" with respect to the belief-state transformer determinization of a probabilistic automaton.
\end{proposition}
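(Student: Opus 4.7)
The plan is to apply Theorem~\ref{theo:ourcompatible} in the specific instance $\catC = \Set$, $\monT = \monD$, $\monS = \monP$ and $\funG = (-)^A$, using the "weak distributive law" $\wdld : \monD\monP \Rightarrow \monP\monD$ of Example~\ref{ex:pd} together with the "distributive law" $\wdls$ introduced at the beginning of this subsection. Since idempotents split in $\Set$, Proposition~\ref{prop:gendet} applies and produces the determinized coalgebra $c^+ : \funD X \to (\funP\funD X)^A$ of a probabilistic automaton $c$, which coincides with the belief-state transformer of~\cite{Bon17} as already observed above.

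For compatibility of "convex hull" itself, I would appeal to the first bullet of Theorem~\ref{theo:ourcompatible}: it gives that $\sf{Ctx}_{\muDX}$ is $\sf{Rel}(\funP)_{c^+}$-"compatible". The identification of $\sf{Ctx}_{\muDX}$ with "convex hull" is then an immediate unfolding of the definition of contextual closure, since $\muDX$ sends a distribution of distributions to the corresponding convex combination; applied to a relation $R$, this yields precisely the formula for $\sf{Ctx}_{\muDX}$ displayed just above the statement.

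For the congruence, I would use the second bullet of Theorem~\ref{theo:ourcompatible}, whose only extra hypothesis is that $\funG\funS = (-)^A \circ \funP$ preserves weak pullbacks. As recalled immediately above the proposition, both $(-)^A$ and $\funP$ preserve weak pullbacks, and a composite of weak-pullback-preserving functors does as well, so the hypothesis is met and $\sf{Cgr}_{\muDX}$ is $\sf{Rel}(\funP)_{c^+}$-"compatible".

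There is no genuine obstacle to overcome: the substantive part of the argument, namely the bialgebraic statement of Lemma~\ref{lem:ourbialgebra} and its consequence Theorem~\ref{theo:ourcompatible}, has been established in full generality in the previous subsection, and the present proposition is a direct specialization. The only point requiring a specific verification is the preservation of weak pullbacks by $\funG\funS$, which reduces as noted to the corresponding property of each factor; everything else is a direct instantiation.
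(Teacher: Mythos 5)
Your proposal is correct and follows essentially the same route as the paper: the result is obtained as a direct instantiation of Theorem~\ref{theo:ourcompatible} with $\monT = \monD$, $\monS = \monP$, $\funG = (-)^A$, identifying $\sf{Ctx}_{\muDX}$ with convex hull and checking that $(-)^A$ and $\funP$ (hence their composite) preserve weak pullbacks. Your explicit remark that weak-pullback preservation is closed under composition is a small but welcome addition the paper leaves implicit.
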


\begin{remark} \label{rem:wdlmorphism}
As previously mentioned, the structure of the alternating automaton of Figure~\ref{fig:aa} is the same as the one of the probabilistic automaton example in~\cite{Bon17} --- and determinization of both systems remain strikingly similar. This is due to the fact that the natural transformation $\supp : \funD \Rightarrow \funP$ is a morphism of "weak distributive laws", in the sense that the following diagram commutes, where vertical arrows are the canonical "weak distributive laws" described in Example~\ref{ex:pd} and Example~\ref{ex:pp}.
\begin{center}
\begin{tikzcd}
\funD \funP \ar[r,"\sf{supp}\funP" above] \ar[d] & \funP \funP \ar[d] \\
\funP \funD \ar[r,"\funP \sf{supp}" below] & \funP \funP 
\end{tikzcd}
\end{center}
\end{remark}

\section{Conclusion}

\subsection*{Summary of Results}

The slogan of this article might be: if there is almost a "distributive law", then everything goes almost smoothly. 
By paying the light price of removing one unit diagram, or equivalently dislocating one identity natural transformation into two that are pseudo-inverses, one retrieves a lot of the results that have drawn interest from the community over the last few decades.

First, we noticed that suppressing one of the monad structures leads to a notion of "weak distributive law" of a monad over a functor. As "weak distributive laws" can give rise to new monads, we provided an abstract compositionality result (Theorem~\ref{theo:compo}) to derive a "weak distributive law" of the composite monad. We also fully extended generalized determinization to the setting of "weak distributive laws" (Proposition~\ref{prop:gendet}) --- this is the logical continuation of~\cite{goypetrisan} where only one (monad-monad) "weak distributive law" was involved. Finally, we identified that the "bialgebraic" approach to "up-to techniques" could be adapted effortlessly to the weak framework (Theorem~\ref{theo:ourcompatible}), hence providing a full batch of sound "up-to techniques" that were out of the reach of plain "distributive laws".

Our two primary examples, "alternating automata" and probabilistic automata, come directly from the two black sheep that have caused much ink to flow in the recent years: powerset over powerset, and distribution over powerset. Our framework explains how generalized determinization of "alternating automata" modelled with double "covariant powerset monad" occurs in a "distributive law"-like manner. To our knowledge, this is the first time that this is performed. This determinization is sound in the sense that semantics is preserved, and canonical in the sense that it corresponds to the canonical powerset-powerset "weak distributive law". Our results also yield that union closure is "compatible" with Milner-Park "bisimulation" for "alternating automata", and we retrieve back that "convex hull" is "compatible" with Milner-Park "bisimulation" for probabilistic automata.

\subsection*{Future Work}

Exciting insights are lying ahead. Can we tell more about the "weak distributive law" morphisms of Remark~\ref{rem:wdlmorphism}, e.g. can they be used as a source of new examples of "weak distributive laws"? Such morphisms have already been studied in the strong framework e.g. in~\cite{power2002combining},~\cite{bonsangue2013presenting}. This might be further investigated to provide answers to the more general question: is there something more beyond, once the case of $\monD\monP \Rightarrow \monP\monD$ and $\monP\monP \Rightarrow \monP\monP$ is settled? Our framework introduces monad-functor "weak distributive laws", but what are instances of this that do not arise directly from forgetting a monad structure?

There is a variety of other ways of performing generalized determinization with composed "weak distributive laws". One example in the case when "Yang-Baxter" holds is to start from a system in $\Coalg{\funG\funS\funT}$, use $\piwl^*$ to go into $\Coalg{\funG\monST}$, and then perform determinization with respect to the $\monST\funG \Rightarrow \funG\monST$ obtained in Theorem~\ref{theo:compo}. For the moment it is unclear how this construction would relate to the double determinization mentioned in Remark~\ref{rem:chain}. A further point is we did not prove that replacing the functor with a third monad in the compositionality theorem makes diagram $(\muR)$ with respect to $\monST$ fail. This would be an interesting result, as this would entail that in Theorem~\ref{theo:compo} one can not add further monads as in~\cite{cheng2011iterated}.

%%
%% Bibliography
%%

%% Please use bibtex, 

\bibliography{biblio}

\newpage
\appendix

\section{Proofs of Section~\ref{sec:weakframework}}
 
\subsection{Proof of Theorem~\ref{theo:compo}}

Let us prove Theorem~\ref{theo:compo} i.e. $\wdll = \funG \piwl^* \circ \wdlt \funT \circ \funS \wdls \circ \iotawl^* \funG$ is a "weak distributive law" of type $\monST \funG \Rightarrow \funG \monST$. Recall that notation $\piwl^*$ (resp. $\iotawl^*$) stands for $\piwl \funFuT$ (resp. $\iotawl \funFuT$).\\

The proof that the $(\muST)$ diagram commutes begins as follows:

\adjustbox{scale=0.9,center}{%
% https://tikzcd.yichuanshen.de/#N4Igdg9gJgpgziAXAbVABwnAlgFyxMJZABgBpiBdUkANwEMAbAVxiRAB12BbAgZQBVOPMAIAEnAGZMwAJRABfUuky58hFAEZyVWoxZshfQeykjJ046bmLl2PASIAmbdXrNWiDtyPmzJ6TK+-ApKIBh2akQAzC667gbeIpYBvrxBIbaqDigALLFu+p6GSb6BiQIZYSr26sgArPl6Hl6maf5gybLFFTZVEdnIAGyN8UXtbVZB3cG94Vm1AOwjhS0p41PlM6FzNUQAHMvNpdPTlTuRKGSDOgVHm6Vn1RfIB9euTQmTm4-9tc5vcRWqQ2Vh+8yIeQBt0+0gma1MW0yuxQwyhHzGX1a6VmTwGziiN3Rqz8COB7U61m2uNqMQJ71GxLhHTJmIsYORyDydMBdyx5OO63J7OeDW50IxsIFpMFCOFA2GYqJUsl-JlbJxvycpDqhIZLJV0sNATltWGOvpQPaZT5Nos2KpmtypA0upWAApHABKE1ELQ5V3NN05b0a8FO-0WwN1EMOsPIJYRnlsN2DGNI57OF2R5MaNN9OPDLNJzxuqIhnQwKAAc3gRFAEgAThAuEgyCAcBAkFpi15hGJOPgcHQAO4MAB6ACpxFaQNQGHQAEYwBgABWpbAbWCrAAscJVG82u9QO0hnD3ptPWtPh1AGAg54vl2vHSBNzu970Dy3EGeT4gYuemzXreOCXmyD5Lqu66eG+u77k234AX+eSAX2FLTmgWCjpOs4gPOkHPmGr5bnBn4IUgKF-g0gGDiO45TgKaG4fhT7QcR77wYeiDUX+wyAVenA3ne05MRBrEvrBH6hF+SB8X+SyAUJOBBCJRjMY+UESSRUn1uRiAKX+Bz8QEGFYfRqlJOpBFsZJnHfholGdogGjUeKjIgiqQn3nhGmEci7GkdJekuceTkaHxbn6sy7BKekYmaURtlkVx4WhV2CmRYKXlRcE8V+RcAU6SAMnOQZYVGZlsWqraHRWeJiXaXZXbdn+GhnplAkxbeCDJfZv5hQBlUgWBtW9UgACcaXOd2mW0dhDEzmNzlIWFbYdSZnCYfNTXOY5XaTcZIjAQwUAjRSO0aAdrUzUq3BMG0XBMOdS0aEZf6OG2LEJf5tnZhKXRZbep32rpXEfVNGgAV9+XZIVuHrQDj0PU9O0rV2EW3XN5nKiS4E+dZWkcS9clOa5t01ehm1mTheU2Y1S1tn+B2ZUj5204TgWg9+b2rX9xJlKzF080gGXk-CBp2uwW30XV30FUlQVcYzTkoUuYBQEgURrbdaHHQwO17YgRlqxr-7a3qXUMAwFk9Ir35XbzIAm0gAC0Wt84Jt76y9EMobNEBDvNI1Mmz+P1T99MUPIQA
\begin{tikzcd}
\monST\monST \funG \arrow[r, "\monST \iotawl^* \funG"'] \arrow[dddddd, "\muST\funG"'] \arrow[rrrr, "\monST \wdll", bend left] & \monST\funS\funT\funG \arrow[r, "\monST \funS \wdls"'] \arrow[rddd, "\iotawl^* \funS\funT\funG"'] & \monST\funS\funG\funT \arrow[r, "\monST \wdlt \funT"']                                            & \monST\funG\funS\funT \arrow[r, "\monST\funG \piwl^*"'] \arrow[rddd, "\iotawl^* \funG\funS\funT"'] & \monST\funG\monST \arrow[r, "\iotawl^* \funG\monST"'] \arrow[rrrr, "\wdll \monST", bend left]         & \funS\funT\funG\monST \arrow[r, "\funS \wdls \monST"']       & \funS\funG\funT\monST \arrow[r, "\wdlt\funT \monST"']                                                  & \funG\funS\funT\monST \arrow[r, "\funG \piwl^* \monST"'] & \funG\monST\monST \arrow[dddddd, "\funG\muST"] \\
                                                                                                                              &                                                                                                   & (1)                                                                                               &                                                                                                    & (2)                                                                                                   &                                                              & (3)                                                                                                    &                                                          &                                                \\
                                                                                                                              &                                                                                                   &                                                                                                   &                                                                                                    &                                                                                                       &                                                              &                                                                                                        &                                                          &                                                \\
                                                                                                                              &                                                                                                   & \funS\funT\funS\funT\funG \arrow[r, "\funS\funT\funS\wdls"'] \arrow[dd, "\funS \wdld \funT\funG"] & \funS\funT\funS\funG\funT \arrow[r, "\funS\funT\wdlt\funT"']                                       & \funS\funT\funG\funS\funT \arrow[r, "\funS\wdls\funS\funT"'] \arrow[ruuu, "\funS\funT\funG \piwl^*"'] & \funS\funG\funT\funS\funT \arrow[r, "\wdlt\funT\funS\funT"'] & \funG\funS\funT\funS\funT \arrow[dd, "\funG\funS\wdld \funT"'] \arrow[ruuu, "\funG\funS\funT\piwl^*"'] &                                                          &                                                \\
                                                                                                                              & (4)                                                                                               &                                                                                                   &                                                                                                    & (5)                                                                                                   &                                                              &                                                                                                        & (6)                                                      &                                                \\
                                                                                                                              &                                                                                                   & \funS\funS\funT\funT\funG \arrow[d, "\muS\muT\funG"]                                              &                                                                                                    &                                                                                                       &                                                              & \funG\funS\funS\funT\funT \arrow[d, "\funG\muS\muT"']                                                  &                                                          &                                                \\
\monST\funG \arrow[rr, "\iotawl^* \funG"] \arrow[rrrrrrrr, "\wdll", bend right]                                               &                                                                                                   & \funS\funT\funG \arrow[rr, "\funS \wdls"]                                                         &                                                                                                    & \funS\funG\funT \arrow[rr, "\wdlt \funT"]                                                             &                                                              & \funG\funS\funT \arrow[rr, "\funG \piwl^*"]                                                            &                                                          & \funG\monST                                   
\end{tikzcd}
}
The three external curved diagrams commute by definition of $\wdll$. Diagram $(1)$,$(2)$,$(3)$ commute respectively because of naturality of $\iotawl^*$, naturality of $\iotawl^*$, naturality of $\wdlt\funT \circ \funS\wdls$. Diagram $(5)$ commutes as in \cite{cheng2011iterated} by making use of the "Yang-Baxter" hypothesis. Hence only diagrams $(4)$ and $(6)$ remain to be proved commutative. Note that as $\monST$ is obtained from the composite adjunction
\begin{tikzcd}
\EMSwl \ar[r,yshift=-0.4em] & \EMT \ar[l,yshift=0.6em, "\bot" below] \ar[r, yshift = -0.4em] & \catC \ar[l, yshift = 0.6em, "\bot" below]
\end{tikzcd}
, we have $\muST = \funUuT \muSwl \funFuT \circ \funUuT \funSwl \varepsilon^{\monT} \funSwl \funFuT$ where $\varepsilon^{\monT}$ is the counit of adjunction $\funFuT \dashv \funUuT$.
Recall also that according to the proof of Proposition \ref{prop:weakbij}, $\wdld = \begin{tikzcd}
\funT\funS \ar[r,"\funT\funS\etaT" above] & \funT\funS\funT \ar[r, "\funT \piwl\funFuT" above] & \funT \funUuT \funSwl \funFuT \ar[r, "\funUuT \varepsilon^{\monT} \funSwl \funFuT" above] & \funUuT \funSwl \funFuT \ar[r, "\iotawl\funFuT" above] & \funS\funT \end{tikzcd}$. Denote $\alpha = \iotawl \circ \funUuT \varepsilon^{\monT} \funSwl \circ \funT \piwl : \funT \funS \funUuT \Rightarrow \funS \funUuT$ for convenience.

Let us first prove that diagram $(4)$ commutes. It suffices to prove that the following diagram commutes (we just suppressed all the $\funG$ on the right):

\begin{center}
% https://tikzcd.yichuanshen.de/#N4Igdg9gJgpgziAXAbVABwnAlgFyxMJZABgBpiBdUkANwEMAbAVxiRAB12BbAgZQBVOPMAJABfUuky58hFAGZyVWoxZshfQewBmTEZ11h+4ySAzY8BIgFZSAJmX1mrRBx17eBvVsOf3RkykLWSIyAHZHVRc3YVEJIJkrFFsI6ic1Vy99f2N4s2lLOWRba0jndX8-Xyyfb0D84KSSUlK0qIrDAFUmfgACLN4AdwZ+yuHRwwAxHvrzRKLbABYyjLdqnIGamtmCkIUWlejNjcrtvLnCojt7Q46PLe5NHcai64A2W8zTx5Fc0wu9shFPJPmt7idYn8EpdkqQPm1yl91si6uddk1FABGUEACjoAEpnvMiJjSCCEascQAjQlol5ERTLCnRHEAY1p-3RRUWB2ZbBxUA50MBpPhKkRIBxMFpyhgUAA5vAiKBtAAnCBcJBkEA4CBIN58pF6bp9AbjTj0VUwNDYBgECYec3+aZQkBqjX66i6pCKcWrLIm0ZcJhDEZZF31d2axDa72IUl+o4-ASjfA4OhOqYzPJRpAJuPXRMVNMZsPOnoO7KGV25xCFuO2ItGkSjQZQBhQSvGahUmBgKA+4g59XRxtxnlNmIhoQzHt9geIeRD0y133jw1uEuZvQR4cemNevWLuf9pAAWiXG5nomoDDovYYAAUuWxVVh5QALHCRkdIetHsIr2+asBlGGB01NHIQFve8YCfF9XDfT9vz3aNAJ1I9G3SJNfFGRg0A-OhKxdbYYIfZ96UQ98vx-fd0LjAAOICUSMY5gxrX860PJAmJABgsDAaIoAgJgqQYVhqA-GA6AXMAmAYBgvToLAGDYSBBNo6NYyPABOICtzLLNIMhTSkD0jC821bC7hbAMKwtOgrRtFT7TNQyd1nPjYPgyiQCQmjUM9Cz4ys9ovgMytQ2I7MV048yC2Y8EQPYAzwxilVOMxfNdNvAShJEsSJJAKSZKQOSFKUlS1IIVhAq44KssS2yclGNAsG3MAIzIuCKOJKjkNMkLuPjQtrObXhUwgdMOt3WL90xf8kAnMapyqOpup8vq-OolC5tHYaFqatbsnY6CvPIhDtoGureLjQ7J2OfDCOi3IKDEIA
\begin{tikzcd}
\monST\monST \arrow[ddddd, "\funUuT \funSwl \varepsilon^{\monT} \funSwl \funFuT"] \arrow[rrr, "\monST \iotawl \funFuT"] \arrow[ddddddd, "\muST"', bend right] \arrow[rrdd, "\iotawl \funFuT \monST"] &     &                                                                                                                     & \monST\funS\funT \arrow[rrdd, "\iotawl \funFuT \funS\funT"] &     &                                                                                                                                                                                 \\
                                                                                                                                                                                             &     &                                                                                                                     & (a)                                                         &     &                                                                                                                                                                                 \\
                                                                                                                                                                                             &     & \funS\funT\monST \arrow[dddd, "\funS \funUuT \varepsilon^{\monT} \funSwl \funFuT"'] \arrow[rrr, "\funS\funT\iotawl\funFuT"] &                                                             &     & \funS\funT\funS\funT \arrow[ddd, "\funS \wdld \funT", bend left] \arrow[dd, "\funS\funT\funS \etaT \funT"'] \arrow[llddd, equal] \arrow[lld, "\funS \funT \piwl \funFuT"'] \\
                                                                                                                                                                                             & (b) &                                                                                                                     & \funS\funT\monST \arrow[lu, equal]                     &     &                                                                                                                                                                                 \\
                                                                                                                                                                                             &     &                                                                                                                     & (c)                                                         &     & \funS\funT\funS\funT\funT \arrow[d, "\funS \alpha \funFuT\funT"'] \arrow[lld, "\funS\funT\funS\muT"]                                                                            \\
\funUuT \funSwl \funSwl \funFuT \arrow[dd, "\funUuT \muSwl \funFuT"] \arrow[rrd, "\iotawl \funSwl \funFuT"]                                                                                  &     &                                                                                                                     & \funS\funT\funS\funT \arrow[rrd, "\funS\alpha \funFuT"]     & (d) & \funS\funS\funT\funT \arrow[dd, "\muS\muT", bend left] \arrow[d, "\funS\funS\muT"']                                                                                             \\
                                                                                                                                                                                             & (e) & \funS\monST \arrow[rrr, "\funS \iotawl \funFuT"]                                                                    &                                                             &     & \funS\funS\funT \arrow[d, "\muS\funT"']                                                                                                                                         \\
\monST \arrow[rrrrr, "\iotawl \funFuT"]                                                                                                                                                      &     &                                                                                                                     &                                                             &     & \funS\funT                                                                                                                                                                     
\end{tikzcd}
\end{center}

The three external triangles with bent arrows commute by definition. The two internal triangles commute because of a monad property and the fact $\pi \circ \iota = 1$. Diagrams $(a),(b)$ commute by naturality of $\iotawl$. Diagram $(c)$ commutes by definition of $\alpha$. Diagram $(e)$ commute because of the $(\iota\mu)$ diagram of "weak liftings". The diagram $(d)$ is a little bit harder. By monad properties, $\muTX : \funT\funT X \Rightarrow \funT X$ is a actually a morphism $[\muTX] : (\funT\funT X, \muTTX) \to (\funT X, \muTX)$ in $\EMT$. This yield a natural transformation $[\muT] : \funFuT \funT \Rightarrow \funFuT$ such that $\funUuT [\muT] = \muT$. Diagram $(d)$ then commutes by replacing $\muT$ with $\funUuT [\muT]$ and using naturality of $\alpha : \funT\funS\funUuT \Rightarrow \funS\funUuT$.\\

Let us finally prove that diagram $(6)$ commutes. It amounts to the following:

\begin{center}
% https://tikzcd.yichuanshen.de/#N4Igdg9gJgpgziAXAbVABwnAlgFyxMJZAZgBoAGAXVJADcBDAGwFcYkQAdDgM2bAGUuvMABUuAWwL8RIAL6l0mXPkIoArBWp0mrdhKliOkgTPmLseAkQ0A2LQxZtEnIwbkKQGCyqLlSAJnsdJxdhQR4+QzChSPdzZSsUPwAWIMc9CIEYrMyo2LNPJUtVZD87GgddZ2zw4VMPLwSSjVSK4IzhAFVmEQACGoB3Rn7M-iGR4QAxHrjC70TS0jU0qtC+WvXs+vji31JiFZCarePcrdnG3ZQyZbb06tH9Ewuin2v9w47N12eCy7fkP4Ap8HtEzo9ci95iUAIwfO6rAAUWAAlFCmkQgQcESFkaj0VdkMl4dp7iA8fi-q8FsTbqSkVhaGjZFoYFAAObwIigbgAJwg4iQGhAOAgSCB9KOHDQWCG2Wm2xAfIFSDIIrFiGJkq+OQGUEYUHONAARjAwFAkABaZIAThojHopsYAAVqaoQLysOyABY4WbKwWamiioU4jLiZjhCMye2OmAut3sT0+v0ms0WxDWm0FAOq4Mavza0HfOpcGVyzIK-38wOFkOIOFFlzl4bynojYzSasqhv58Vhh4Rrtp82q8g5mtIRv1mwDtZgbp9QatjgMXkwNDYRgECbrcZtmMgB1O13QpNe33dwOz9X9pvZRcdyP7yszWMnxPOZOXic9rX1gB2OdTjBaMQHfeNTwxZwsDAbBYCvJAgNvRBhUqKUh3OX9AzVesAA5gIhUsIRgHB6DyURwKPOMEzPZxGBgbg-WwpACJQrV0J1cImDQb16APXdKJYxBkPrO17whFsD0QxBxJnQipNGCsphmYS6w1GFC044sBEEx8uDXDctx3ZdBKrYTNL7WT7VgkIoAgZhjQYqjvRgegMzAZhGEYYN6CwRh2EgMA2As9SkMIsF8DI5S+HMjxc0QNj6xhRttPnDZRFOMCINo6CPQvZj4snRBcI0xtGFs9h7Mc5yaFc9ykE87zfP8wKCBCoqexSqzkLS04eL46Scqgq4QFg+CQsoWQgA
\begin{tikzcd}
                                                                                                                                                                         &     &                                                                      & \funS\funT\monST \arrow[rr, "\piwl \funFuT \monST"] \arrow[ddd, "\funS \funUuT \varepsilon^{\monT} \funSwl \funFuT"] &       & \monST\monST \arrow[dddddd, "\muST", bend left] \arrow[dddd, "\funUuT \funSwl \varepsilon^{\monT} \funSwl \funFuT"'] \\
                                                                                                                                                                         &     &                                                                      &                                                                                                              &       &                                                                                                              \\
\funS\funT\funS\funT \arrow[dd, "\funS\wdld\funT"', bend right=49] \arrow[rrruu, "\funS\funT\piwl\funFuT"] \arrow[d, "\funS\funT\funS\etaT\funT"] \arrow[rr, equal] &     & \funS\funT\funS\funT \arrow[llddd, "\funS\alpha\funFuT" description] &                                                                                                              &       &                                                                                                              \\
\funS\funT\funS\funT\funT \arrow[d, "\funS\alpha\funFuT \funT"] \arrow[rru, "\funS\funT\funS\muT"']                                                                      & (i) & (ii)                                                                 & \funS\monST \arrow[dd, equal] \arrow[llldd, "\funS\iotawl\funFuT"]                                      & (iii) &                                                                                                              \\
\funS\funS\funT\funT \arrow[dd, "\muS\muT"', bend right=49] \arrow[d, "\funS\funS\muT" description]                                                                      &     &                                                                      &                                                                                                              &       & \funUuT \funSwl \funSwl \funFuT \arrow[dd, "\funUuT \muSwl \funFuT"']                                        \\
\funS\funS\funT \arrow[d, "\muS\funT"] \arrow[rrr, "\funS\piwl\funFuT"]                                                                                                  &     &                                                                      & \funS\monST \arrow[rru, "\piwl\funSwl\funFuT"]                                                               & (iv)  &                                                                                                              \\
\funS\funT \arrow[rrrrr, "\piwl\funFuT"]                                                                                                                                 &     &                                                                      &                                                                                                              &       & \monST                                                                                                      
\end{tikzcd}
\end{center}

The three external triangles with bent arrows commute by definition. The two internal triangles commute because of a monad property and the fact $\pi \circ \iota = 1$. Diagram $(i)$ commutes for the same reasons as for diagram $(d)$ in the preceding diagram. Diagram $(ii)$ commutes by definition of $\alpha$, diagram $(iii)$ by naturality of $\pi$, and diagram $(iv)$ by using the $(\pi\mu)$ diagram of "weak liftings". This achieves the proof that the $(\muST)$ diagram commutes.\\\\

Now assume that "weak distributive laws" $\wdls$ and $\wdlt$ are actually strong. The following diagram proves that the $(\etaST)$ diagram commutes i.e. that $\wdll$ is strong:

\begin{center}
% https://tikzcd.yichuanshen.de/#N4Igdg9gJgpgziAXAbVABwnAlgFyxMJZABgBpiBdUkANwEMAbAVxiRAB12BbAgZQBVOAMyZgA4iAC+pdJlz5CKAEzkqtRizbDRvbWEHsR4qTJAZseAkQAsq6vWatEHQzr1i9-E7IsKiANjt1Ry1XcT1dMK9pH3krFAAOIIdNZ3dOHjABbzM5S0VkW2s1FKcXIwkY3N945BUlEo0yz3cc8ziC2wb7JtCjSIq2vL8UQO7g1PLRDyipNRgoAHN4IlAhACcILiQAVmocCCQyEAAjGDAoJABmY9LQmBw6AVaqja3d-cPEW1Pzy8QALQ3HohNJhGYPJ5eagMOhnBgABWG8RA6ywiwAFjgcm9tohjgckABGEGTTj4R4AdwYAD0AFQAAheplxxM+SBUE2aYUilKgDAQr02eM5hMQV1J3L5DBwnhxwuu7O+kr6004aCw1Pp8veiD2IDFgS590eBkGQt1RrFt16YKMAFUmPwmexIbxqS6jAAxJ2e6YgGFwmCI5GKVHorE6vFWr4k41gt0tcEBkCw+FImphtGY7EWvH6sUAdhVCceA39eaQxYNsZLUyynEhZv9gfTobY2cjlcQ1bFnLudrc4MbpqjHxrSCS8frENH3anfbrI6e6Vm86VEunq8Ta5ZCsQC6+PwHM70joMbupeh90NTQZDmbYDBgQlzFEkQA
\begin{tikzcd}
\monST\funG \arrow[rr, "\iotawl^* \funG"] &  & \funS\funT\funG \arrow[rr, "\funS\wdls"]                                                 &  & \funS\funG\funT \arrow[rr, "\wdlt\funT"]                                                                                                                                     &  & \funG\funS\funT \arrow[rr, "\funG\piwl^*"]                                                                          &  & \funG\monST \\
                                          &  &                                                                                          &  &                                                                                                                                                                              &  &                                                                                                                     &  &             \\
                                          &  & \funT\funG \arrow[lluu, "\funUuT \etaSwl \funFuT \funG"'] \arrow[uu, "\etaS\funT\funG"'] &  & \funS\funG \arrow[lluu, "\funS\etaT\funG"'] \arrow[uu, "\funS\funG\etaT"]                                                                                                    &  & \funG\funT \arrow[lluu, "\etaS\funG\funT"] \arrow[uu, "\funG\etaS\funT"] \arrow[rruu, "\funG\funUuT\etaSwl\funFuT"] &  &             \\
                                          &  &                                                                                          &  &                                                                                                                                                                              &  &                                                                                                                     &  &             \\
                                          &  &                                                                                          &  & \funG \arrow[lllluuuu, "\etaST\funG", bend left] \arrow[rrrruuuu, "\funG\etaST"', bend right] \arrow[lluu, "\etaT\funG"] \arrow[uu, "\etaS\funG"] \arrow[rruu, "\funG\etaT"] &  &                                                                                                                     &  &            
\end{tikzcd}
\end{center}
The two external triangles commute by definition of $\etaST$. The leftmost and rightmost triangles commute by "weak liftings" properties $(\pi\eta)$ and $(\iota\eta)$. The two squares commute by naturality. Finally, the two remaining triangles use the $(\etaT)$ and $(\etaS)$ diagrams of "weak distributive laws" with respect to $\funG$.\\
This achieves the proof of Theorem~\ref{theo:compo}.

\subsection{Proof of Example~\ref{ex:compo}}

We prove that the following "Yang-Baxter" diagram related to Example~\ref{ex:compo} commutes. 

\begin{center}
\begin{tikzcd} \label{diag:yangbaxter}
                                                                                 & \funP \funM \funP \arrow[r, "\wdls \funP"] & \funM \funP \funP \arrow[rd, "\funM \wdld"]          &                   \\
\funP \funP \funM \arrow[ru, "\funP \wdlt"] \arrow[rd, "\wdld \funM"'] &                                                       &                                                       & \funM \funP \funP \\
                                                                                 & \funP \funP \funM \arrow[r, "\funP \wdls"']  & \funP \funM \funP \arrow[ru, "\wdlt \funP"'] &                  
\end{tikzcd}
\end{center}

Let us first compute both sides of the equation. Let $\mathcal{S} \in \funP\funP\funM X$.\\ The pair $(\funM\wdld \circ \wdls \funP \circ \funP\wdlt)_X(\mathcal{S})$ has output 
\begin{equation} \bigwedge_{S \in \mathcal{S}} \bigvee_{(o,f) \in S} o \end{equation}
and $a$-transition 
\begin{equation} \{V \subseteq X \mid V \subseteq \{f(a) \mid (o,f) \in S, S \in \mathcal{S}\} \text{ and } \forall S \in \mathcal{S}, \{f(a) \mid (o,f) \in S\} \cap V \neq \emptyset \} \end{equation}
The pair $(\wdlt \funP \circ \funP \wdls \circ \wdld\funM)_X(\mathcal{S}) $ has output 
\begin{equation} \bigvee_{\left(W \subseteq \funM X \text{ s.t. } W \subseteq \bigcup \mathcal{S} \text{ and } \forall S \in \mathcal{S}, S \cap W \neq \emptyset\right)} \bigwedge_{(o,f) \in W} o \end{equation}
and $a$-transition 
\begin{equation} \{\{f(a) \mid (o,f) \in W \} \mid W \subseteq \funM X \text{ s.t. } W \subseteq \bigcup \mathcal{S} \text{ and } \forall S \in \mathcal{S}, S \cap W \neq \emptyset \} \end{equation}

%% Computations are here:
%\begin{align}
%(\funM\wdld \circ \wdls \funP \circ \funP\wdlt)_X(\mathcal{S})
%& = (\funM \wdld \circ \wdls \funP)_X \left(\left\{\left(\bigvee_{(o,f) \in S}, a \mapsto \{f(a) \mid (o,f) \in S\} \right) \mid S \in \mathcal{S} \right\} \right) \\
%& = \funM \wdld_X \left(\bigwedge_{S \in \mathcal{S}} \bigvee_{(o,f) \in S} o, a \mapsto \{\{(f(a) \mid (o,f) \in S \} \mid S \in \mathcal{S}\}\right) \\
%& = \left(\bigwedge_{S \in \mathcal{S}} \bigvee_{(o,f) \in S}, a \mapsto \{V \subseteq X \mid V \subseteq \{f(a) \mid (o,f) \in S, S \in \mathcal{S}\} \text{ and } \forall S \in \mathcal{S}, \{f(a) \mid (o,f) \in S\} \cap V \neq \emptyset \} \right) 
%\end{align}
%
%\begin{align}
%(\wdlt \funP \circ \funP \wdls \circ \wdld\funM)_X(\mathcal{S}) 
%& = (\wdlt \funP \circ \funP \wdls)_X(\{W \subseteq \funM X \mid W \subseteq \bigcup \mathcal{S} \text{ and } \forall S \in \mathcal{S}, S \cap W \neq \emptyset \}) \\
%& \wdlt_{\funP X} \left( \left\{ \left( \bigwedge_{(o,f) \in W} o, a \mapsto \{f(a) \mid (o,f) \in W\}\right) \mid W \subseteq \funM X \text{ such that } W \subseteq \bigcup \mathcal{S} \text{ and } \forall S \in \mathcal{S}, S \cap W \neq \emptyset \right\} \right) \\
%& \left( \bigvee_{W \subseteq \funM X \text{ s.t. } W \subseteq \bigcup \mathcal{S} \text{ and } \forall S \in \mathcal{S}, S \cap W \neq \emptyset} \bigwedge_{(o,f) \in W} o, a \mapsto \{\{f(a) \mid (o,f) \in W \} \mid W \subseteq \funM X \text{ s.t. } W \subseteq \bigcup \mathcal{S} \text{ and } \forall S \in \mathcal{S}, S \cap W \neq \emptyset \} \right)
%\end{align}

Let us first prove that outputs are equal. As they equal either $0$ or $1$, this amounts to showing the following lemma.

\begin{lemma}
The two following propositions are equivalent:
\begin{enumerate}
\item[$(i)$] For all $S \in \mathcal{S}$, there is $f \in X^A$ such that $(1,f) \in S$.
\item[$(ii)$] There is $W \subseteq \funM X$ such that $W \subseteq \bigcup \mathcal{S}$, for all $S \in \mathcal{S}$ one has $S \cap W \neq \emptyset$, and for all $(o,f) \in W$, $o = 1$.
\end{enumerate}
\end{lemma}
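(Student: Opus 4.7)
The plan is to prove the two implications separately, as both directions are essentially witness-constructions on finite/indexed data.

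For $(i) \Rightarrow (ii)$, I would use the choice afforded by hypothesis $(i)$ to pick, for each $S \in \mathcal{S}$, an element of the form $(1, f_S) \in S$, and then set
\begin{equation}
W = \{(1, f_S) \mid S \in \mathcal{S}\}.
\end{equation}
Then I would check the three required properties of $W$ one by one: the inclusion $W \subseteq \bigcup \mathcal{S}$ holds because each $(1, f_S)$ lies in some $S \in \mathcal{S}$; for each $S \in \mathcal{S}$ the element $(1, f_S)$ witnesses $S \cap W \neq \emptyset$; and by construction every element of $W$ has first component equal to $1$.

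For $(ii) \Rightarrow (i)$, I would fix such a $W$ and let $S \in \mathcal{S}$ be arbitrary. The hypothesis $S \cap W \neq \emptyset$ provides some $(o, f) \in S \cap W$, and since every element of $W$ has first component $1$, this gives $(1, f) \in S$, which is exactly what $(i)$ asserts.

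I do not expect any obstacle: both directions are immediate by choosing the obvious witness, and no topological or monadic structure is really used. The only subtlety is that the $f_S$ chosen in the forward direction may not be distinct across different $S$, but this is harmless since $W$ is defined as a set and duplicates collapse automatically.
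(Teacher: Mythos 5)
Your proof is correct and matches the paper's argument essentially verbatim: the same witness $W = \{(1,f_S) \mid S \in \mathcal{S}\}$ for the forward direction, and the same use of $S \cap W \neq \emptyset$ together with the condition on first components for the converse. Nothing further is needed.
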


\begin{proof}
We begin by the implication $(i) \Rightarrow (ii)$. For every $S \in \mathcal{S}$ we can fix $f_S \in X^A$ such that $(1,f) \in S$. Let $W = \{(1,f_S) \mid S \in \mathcal{S}\} \subseteq \funM X$. As $(1,f_S) \in S \in \mathcal{S}$ we have $W \subseteq \bigcup \mathcal{S}$ as well. Let $S \in \mathcal{S}$, then $(1,f_S) \in S \cap W$ so that $S \cap W \neq \emptyset$. Finally, let $(o,f) \in W$. By construction, $(o,f) = (1,f_S)$ for some $S \in \mathcal{S}$; hence $o = 1$. Focus now on the converse implication $(ii) \Rightarrow (i)$. Assume there is a $W \subseteq \funM X$ that satisfies the three conditions stated in $(ii)$. Let $S \in \mathcal{S}$. By the second condition, there is some $(o,f) \in S \cap W$. By the third condition, as $(o,f) \in W$ we have $o = 1$. Hence there is indeed a $f \in X^A$ such that $(1,f) \in S$.
\end{proof}

Second step, let us prove that transitions are equal.

\begin{lemma}
The two following sets are equal:
\begin{itemize}
\item $\{V \subseteq X \mid V \subseteq \{f(a) \mid (o,f) \in S, S \in \mathcal{S}\} \text{ and } \forall S \in \mathcal{S}, \{f(a) \mid (o,f) \in S\} \cap V \neq \emptyset \}$
\item $\{\{f(a) \mid (o,f) \in W \} \mid W \subseteq \funM X \text{ s.t. } W \subseteq \bigcup \mathcal{S} \text{ and } \forall S \in \mathcal{S}, S \cap W \neq \emptyset \}$
\end{itemize}
\end{lemma}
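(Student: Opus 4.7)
The plan is to prove the two displayed sets, call them $A$ (first line) and $B$ (second line), are equal by establishing $B\subseteq A$ and $A\subseteq B$ separately. Both inclusions follow by unpacking the definitions, with the second requiring two applications of choice to build a witness.

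For $B\subseteq A$, suppose $W\subseteq\funM X$ satisfies $W\subseteq\bigcup\mathcal{S}$ and $S\cap W\neq\emptyset$ for all $S\in\mathcal{S}$, and set $V=\{f(a)\mid (o,f)\in W\}$. The inclusion $V\subseteq\{f(a)\mid (o,f)\in S,\ S\in\mathcal{S}\}$ is immediate from $W\subseteq\bigcup\mathcal{S}$. For the intersection condition, given $S\in\mathcal{S}$ pick any $(o,f)\in S\cap W$ (nonempty by hypothesis); then $f(a)$ lies both in $\{f(a)\mid (o,f)\in S\}$ and in $V$.

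For $A\subseteq B$, suppose $V\subseteq X$ belongs to $A$. We construct $W\subseteq\funM X$ in two pieces. First, for each $x\in V$, by $V\subseteq\{f(a)\mid (o,f)\in S,\ S\in\mathcal{S}\}$ we may choose $(o_x,f_x)\in\bigcup\mathcal{S}$ with $f_x(a)=x$. Second, for each $S\in\mathcal{S}$, the intersection $\{f(a)\mid (o,f)\in S\}\cap V$ is nonempty, so choose some $x_S$ in it and then some $(o_S,f_S)\in S$ with $f_S(a)=x_S$; note $x_S\in V$. Define
\[
W=\{(o_x,f_x)\mid x\in V\}\cup\{(o_S,f_S)\mid S\in\mathcal{S}\}.
\]
Then $W\subseteq\bigcup\mathcal{S}$ by construction of both pieces, and $(o_S,f_S)\in S\cap W$ witnesses $S\cap W\neq\emptyset$. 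It remains to check that $\{f(a)\mid (o,f)\in W\}=V$: the inclusion $\subseteq$ holds because $f_x(a)=x\in V$ and $f_S(a)=x_S\in V$, while $\supseteq$ holds because each $x\in V$ equals $f_x(a)$.

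The only subtle point, and the step I would flag as the main obstacle, is the need to choose the witnesses $x_S$ \emph{inside} $V$ rather than arbitrarily in $\{f(a)\mid (o,f)\in S\}$: otherwise the auxiliary pairs $(o_S,f_S)$ would pollute $\{f(a)\mid (o,f)\in W\}$ with extra elements outside $V$, breaking the $\subseteq V$ inclusion. The hypothesis that the intersection with $V$ is nonempty is exactly what makes this choice possible, so the two conditions defining $A$ are used in a tightly coupled way.
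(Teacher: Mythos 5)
Your proof is correct and follows essentially the same route as the paper's: the easy inclusion is verified directly, and for the harder one the witness $W$ is built as the union of a family indexed by $x\in V$ (from the subset condition) and a family indexed by $S\in\mathcal{S}$ with $f_S(a)$ chosen inside $V$ (from the intersection condition). The subtlety you flag — that the $S$-indexed witnesses must land in $V$ to avoid polluting the image — is exactly the point the paper's construction also relies on.
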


\begin{proof}
Begin with the converse inclusion. Consider $V = \{f(a) \mid (o,f) \in W\}$ where $W \subseteq \funM X$ is such that $W \subseteq \bigcup \mathcal{S}$ and for all $S \in \mathcal{S}, S \cap W \neq \emptyset$. First we must prove that $V \subseteq \{f(a) \mid (o,f) \in S, S \in \mathcal{S}\}$. Indeed, for any $(o,f) \in W$, as $W \subseteq \bigcup \mathcal{S}$ there is some $S \in \mathcal{S}$ such that $(o,f) \in S$, whence the inclusion. Second, take $S \in \mathcal{S}$ and prove that $\{f(a) \mid (o,f) \in S\} \cap V \neq \emptyset$. This is directly obtained from the hypothesis $S \cap W \neq \emptyset$. This achieves the proof of the converse inclusion. Let us now prove that the direct inclusion holds as well. Let $V \subseteq X$ be such that $V \subseteq \{f(a) \mid (o,f) \in S, S \in \mathcal{S} \}$ and for all $S \in \mathcal{S}$, $\{f(a) \mid (o,f) \in S\} \cap V \neq \emptyset$. We have to find some $W \subseteq \funM X$ such that the three following facts hold true:
\begin{enumerate}
\item[$(i)$] $W \subseteq \bigcup \mathcal{S}$
\item[$(ii)$] for all $S \in \mathcal{S}$, $S \cap W \neq \emptyset$
\item[$(iii)$] $V = \{f(a) \mid (o,f) \in W\}$
\end{enumerate}
By using the hypothesis we can make the following constructions. For any $x \in V$, find some $S_x \in \mathcal{S}$ and some $(o_x,f_x) \in S_x$ such that $x = f_x(a)$. For any $S \in \mathcal{S}$, find some $(o_S,f_S) \in S$ such that $f_S(a) \in V$. Now define $W = \{(o_x,f_x) \mid x \in V\} \cup \{(o_S,f_S) \mid S \in \mathcal{S}\}$ and prove that it satisfies the three conditions above. For $(i)$, let $(o,f) \in W$. Either $(o,f) = (o_x,f_x)$ for some $x \in V$, and then $(o,f) \in S_x \in \mathcal{S}$ so that $(o,f) \in \bigcup \mathcal{S}$; or $(o,f) = (o_S,f_S)$ and then $(o,f) \in S \in \mathcal{S}$ from what $(o,f) \in \bigcup \mathcal{S}$ follows. This proves that $(i)$ holds true. Consider now $(ii)$. Let $S \in \mathcal{S}$. By construction, $(o_S,f_S) \in S \cap W$. Finally prove property $(iii)$ by double inclusion. Let $x \in V$. Then $(o_x,f_x) \in W$ and $f_x(a) = x$ whence $x \in \{f(a) \mid (o,f) \in W\}$. Lastly, let $(o,f) \in W$ and prove that $f(a) \in V$. Either $(o,f) = (o_x,f_x) \in S_x \in \mathcal{S}$ for some $x \in V$ and then $f(a) = f_x(a) = x \in V$; or $(o,f) = (o_S,f_S) \in S$ for some $S \in \mathcal{S}$ and then $f(a) = f_S(a) \in V$ by construction of $f_S$. This achieves the proof of $(iii)$ and also the whole proof of the lemma.
\end{proof}

These lemma jointly prove the following proposition:

\begin{proposition}
The "Yang-Baxter" diagram \ref{diag:yangbaxter} commutes.
\end{proposition}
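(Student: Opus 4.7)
The plan is to verify the Yang-Baxter diagram pointwise: given any $\mathcal{S} \in \funP\funP\funM X$, I would compute both composites $(\funM\wdld \circ \wdls\funP \circ \funP\wdlt)_X(\mathcal{S})$ and $(\wdlt\funP \circ \funP\wdls \circ \wdld\funM)_X(\mathcal{S})$ inside $\funM\funP\funP X = 2 \times (\funP\funP X)^A$, and then check that they coincide. Since each image is a pair consisting of an output bit in $\{0,1\}$ and, for each letter $a \in A$, a transition in $\funP\funP X$, the equality splits into two independent checks: one for outputs and one for $a$-transitions (parametric in $a$).

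For outputs, after expansion the left-hand composite yields the Boolean term $\bigwedge_{S \in \mathcal{S}} \bigvee_{(o,f) \in S} o$, while the right-hand one yields a disjunction, over subsets $W \subseteq \bigcup\mathcal{S}$ meeting every $S \in \mathcal{S}$, of $\bigwedge_{(o,f) \in W} o$. These two propositional formulas are logically equivalent: from universal witnesses $f_S$ per $S$, build $W = \{(1,f_S) \mid S \in \mathcal{S}\}$; conversely, any admissible $W$ with all outputs equal to $1$ forces each $S$ to contain an accepting pair via $S \cap W$. This is precisely the content of the first of the two lemmas just established.

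For $a$-transitions, each composite produces a subset of $\funP X$. The left-hand side yields all $V \subseteq \{f(a) \mid (o,f) \in S, S \in \mathcal{S}\}$ that meet every slice $\{f(a) \mid (o,f) \in S\}$; the right-hand side yields all images $\{f(a) \mid (o,f) \in W\}$ as $W$ ranges over admissible witnesses. The second lemma above establishes their equality. The converse inclusion is immediate from unpacking the conditions on $W$; the harder direction, and hence the main obstacle, is to reconstruct a witness $W$ from a given $V$. The key trick is to pool two families: one indexed by elements $x \in V$, picking for each $x$ some $(o_x,f_x) \in S_x \in \mathcal{S}$ with $f_x(a) = x$ so as to realise $V$ in the image; and one indexed by $S \in \mathcal{S}$, picking $(o_S,f_S) \in S$ with $f_S(a) \in V$ so as to ensure $W$ meets every $S$. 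Their union then satisfies all three required conditions simultaneously.

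Once both lemmas are in hand, the proposition follows at once by comparing the two composites component by component in $\funM\funP\funP X$.
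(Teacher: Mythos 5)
Your proposal matches the paper's own proof essentially verbatim: both verify the diagram pointwise on $\mathcal{S} \in \funP\funP\funM X$, split the check into the output component and the $a$-transition component, and establish each by a dedicated lemma, with the same key construction of the witness $W$ as the union of a family indexed by $x \in V$ (realising $V$ in the image) and a family indexed by $S \in \mathcal{S}$ (ensuring $W$ meets every $S$). The approach and all the essential details coincide with the paper's argument.
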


\section{Proofs of Section~\ref{sec:gendet}}

\subsection{Proof of Proposition~\ref{prop:gendet}}

This proof consists in a slight extension of the proof of Lemma~$5.1$ of~\cite{goypetrisan}. For the sake of completeness, we explicit here the full details of the proof.

We will denote the "weak liftings" natural transformations by  $\piwl,\iotawl$ for $\wdld$ and $\piwls,\iotawls$ for $\wdls$. Let $(X,c)$ be a $\funG\funS\funT$-coalgebra. Consider the morphism
\begin{equation} \label{eq:thisisconv}
\begin{tikzcd}
X \ar[r,"c" above] & \funG\funS\funT X \ar[r, "\funG \piwl_{\funFuT X}" above] & \funG \funUuT \funSwl \funFuT X \ar[r,"\piwls_{\funSwl \funFuT X}" above] & \funUuT \funGwl \funSwl \funFuT
\end{tikzcd}
\end{equation}
and take its adjoint transpose with respect to the adjunction $\funFuT \dashv \funUuT$ to get a coalgebra $c^* : \funFuT X \to \funGwl \funSwl \funFuT X$. 
The functor $\widehat{\funFuT}$ maps $(X,c)$ to $(\funFuT X, c^*)$. Let $((X,x),d)$ be a $\funGwl \funSwl$-coalgebra in $\EMT$. The functor $\widehat{\funUuT}$ maps $((X,x),d)$ to $\funG\funS$-coalgebra
\begin{equation}
\begin{tikzcd}
X \ar[r, "\funUuT d" above] & \funUuT \funGwl \funSwl (X,x) \ar[r,"\iotawls_{\funSwl (X,x)}" above] & \funG \funUuT \funSwl (X,x) \ar[r, "\funG\iotawl_X" above] & \funG\funS X
\end{tikzcd}
\end{equation}

\subsection{Proof of Lemma~\ref{lem:same}}

The proof consists in the following diagram. Starting from the top-left node of the diagram, the path going right then down is expression  \ref{eq:samething}. The path going down then right equals the construction given in the proof of Proposition~\ref{prop:gendet}.

\adjustbox{scale=0.7,center}{%
% https://tikzcd.yichuanshen.de/#N4Igdg9gJgpgziAXAbVABwnAlgFyxMJZABgBpiBdUkANwEMAbAVxiRAB12AzJsAFQAEADRABfUuky58hFAEZyVWoxZtOPfut4BxLWADKewSPGTseAkQCsi6vWatEHbjqN7DL-sLESQGczJEAJy2yg5qnrqeHhp8Rt6mflIWssghACxK9qpOelEaMbzGPmbSlihkclkqjs6xee7xJr7+ZalkmXY1EfWRnAC2BPrFia0p1qSdYTl1vACqTIJ5AO4MAu6r654AYosJLcmBKAAck9XhuZFbgwYjBwHlyKdVXRezYFE3w00lSQ+pAHZSAAmc4zPQLJaRTYbNZ6XZQ2L7UrjFAANlIL2mtQanji0TceORfzaE1Br3BnkhWw02hhOz27hpRWZXmaKKOyHSmLBOKpjOhcIZiN4hTZvzGnIUWOyfN6tMJBR+o0OjwUAGZeT0irjel87hzHsCzhS5Tq+ux9cTJY91SCtZcNNSVkKNAjWWKRkoYFAAObwIigLgAJwg-SQZBAOAgSAU2O1XgAxr8Q2HY9Ro0hjfHLssoAw4AB9YCNIlCUQgagAIxgYCgEcSqfDiGzmcQdpz720WzzDCgxaaFertfriGIjdDzY7be5nd1ooGiyEAgAvKz8gv+VD6MGYGhsAwCAPhcIhyAa3WG74mxGMzHEDY50TkxO0w+70gMU-5a52GgsKsx5uns5aViADB0DWDAAAqqrIIDBlgvoABY4Cmk5IKcUb3rOsoJt2nD4DgdD0sBBogDe77YemIDITAdCjuABCsNQDBYGAtRQBATBVgwLFRnQWAMGwkAcehb5yJGbYhE+REkQWQGiqRvDuoO4nNjJbYdnhjq-nJgElgSZYVq+zatvechxjpXaKpuMDEXwfDshRGGIHI04Wdm1kQgKtLKWAqnRFs9l0I5IisZBMAwXBbB8VwaGmbGHlZqaET6QphlkR6amJW5H5ubh3S6ZoVycCFfCKQYrLGCZ16uXIs5tnIj7eXi84lf+BnwiBZ4QVBsH-GwiEoQldVvl+TUtUV7z4gq7DlceVqgblE0WV+bFiU4XE8XxYF0QxInMWBxFCYdYm5c1+VyF+rU-h8PnbnQu77kJR6ZUproqT16mxuZsZAt+8y+To-mBUqW5bDue4Hm93U1WBfVRQNpJOHFo1BvVyVuQD+2MaJ-EbZx3G8fxJ3CU4+M-W5f1uZGiPRYNqMwPFYG3UDIofKsAAUnBQy9h5gLCWyJlgwaJqygWdQwACUlUImpw6XmOF1xtJiujpGbP3VuPPsHzMOC9E9Ii2LEsClLsvvQFiw5WNzZyADbYAxeGupcVzqCrr+uvYbBTG6L4tw5wFty99EX9TFTjDahVOSflANax7fmfQY9LewLofw7ljv3lhrXpXAHqgyBseNRZHYu7egNgJCXtPdDPsusLAdm1CUtwJbQdGUiy127Gq1IM7I5V4nix189BtN5wJuByewcAQWndGwwcOnmIFCiEAA
\begin{tikzcd}[column sep=tiny]
\funT X \arrow[r, "\funT c"] \arrow[d, "\funT c"]                                                  & \funT\funG\funS\funT X \arrow[rrrr, "\wdls_{\funS\funT X}"] \arrow[d, "\funT\funG \etaT_{\funS \funT X}"]                                                         &                                                                                                       &                                                                                                                          &                                                                                                                                                                        & \funG\funT\funS\funT X \arrow[rrrr, "\funG \wdld_{\funT X}"] \arrow[rd, "\funG\funT\funS\etaTTX"]                                                                                                             &                                                                                                                 &                                                                                                                                                          &                                                              & \funG\funS\funT\funT X \arrow[dddd, "\funG\funS\muTX = \funG\funS\funUuT \varepsilon^{\monT}_{\funFuT X}"] \\
\funT\funG\funS\funT X \arrow[ddd, "\funT\funG\piwl_{\funFuT X}"'] \arrow[ru, equal] & \funT\funG\funT\funS\funT X \arrow[dd, "\funT\funG\funT\piwl_{\funFuT X}"'] \arrow[rrr, "\funUuT(\varepsilon^{\monT}\funGwl \circ \funFuT \piwls)_{\funFuT\funS\funT X}"] &                                                                                                       &                                                                                                                          & \funUuT \funGwl \funFuT \funS\funT X \arrow[rd, "\funUuT \funGwl \funFuT \funS \etaTTX"] \arrow[ru, "\iotawls_{\funFuT \funS\funT X}"] \arrow[ld, equal] &                                                                                                                                                                                                               & \funG\funT\funS\funT\funT X \arrow[rr, "\funG\funUuT(\varepsilon^{\monT}\funSwl \circ \funFuT \piwl)_{\funFuT\funT X}"] &                                                                                                                                                          & \funG\monST\funT X \arrow[ru, "\funG\iotawl_{\funS\funT X}"] &                                                                                                    \\
                                                                                                   &                                                                                                                                                                   &                                                                                                       & \funUuT \funGwl \funFuT \funS\funT X \arrow[rrdd, "\funUuT \funGwl(\varepsilon^{\monT}\funSwl \circ \funFuT\piwl)_{\funFuT X}"'] &                                                                                                                                                                        & \funUuT \funGwl \funFuT \funS \funT \funT X \arrow[ll, "\funUuT \funGwl \funFuT \funS\funUuT \varepsilon^{\monT}_{\funFuT X}"] \arrow[rr, "\funUuT \funGwl(\varepsilon^{\monT}\funSwl \circ \funFuT \piwl)_{\funFuT\funT X}"] &                                                                                                                 & \funUuT \funGwl \funSwl \funFuT \funT X \arrow[ru, "\iotawls_{\funSwl \funFuT \funT X}"] \arrow[lldd, "\funUuT \funGwl \funSwl \varepsilon^{\monT}_{\funFuT X}"] &                                                              &                                                                                                    \\
                                                                                                   & \funT\funG\funT\monST X \arrow[rd, "\funT\funG\funUuT \varepsilon^{\monT}_{\funSwl \funFuT X}"]                                                                           &                                                                                                       &                                                                                                                          &                                                                                                                                                                        &                                                                                                                                                                                                               &                                                                                                                 &                                                                                                                                                          &                                                              &                                                                                                    \\
\funT\funG\monST X \arrow[ru, "\funT\funG\etaT_{\monST X}"] \arrow[rr, equal]        &                                                                                                                                                                   & \funT\funG\monST X \arrow[rrr, "\funUuT(\varepsilon^{\monT}\funGwl \circ \funFuT \piwls)_{\funSwl\funFuT X}"] &                                                                                                                          &                                                                                                                                                                        & \funUuT \funGwl \funSwl \funFuT X \arrow[rrr, "\iotawls \funSwl \funFuT X"]                                                                                                                                   &                                                                                                                 &                                                                                                                                                          & \funG \monST X \arrow[r, "\funG \iotawl \funFuT X"]          & \funG\funS\funT X                                                                                 
\end{tikzcd}
}

On top of the diagram, the two pentagons commute because of the expressions of $\wdls$ and $\wdld$ obtained via the proof of Proposition~\ref{prop:weakbij}. The top-left triangle commutes trivially. All the other polygons commute because of naturality, except two triangles that use the adjunction property $\funUuT \varepsilon^{\monT} \circ \etaT \funUuT = 1$. 

\subsection{Proof of Proposition~\ref{prop:goodsemantics}}

The proof is elementary and as we said, already sketched in~\cite{klin2015coalgebraic}. Let us recall all necessary ingredients, in particular the usual semantics of non-deterministic automata.

\begin{align}
& \semAA{x}(\varepsilon) = o(x) & \semAA{x}(aw) = \bigvee_{U \in t_a(x)} \bigwedge_{y \in U} \semAA{y} (w) \\
& \semNDA{U}(\varepsilon) = o^+(U) & \semNDA{U}(aw) = \bigvee_{W \in t_a^+(x)} \semNDA{W}(w)
\end{align}
where
\begin{align}
& o^+(U) = \bigwedge_{x \in U} o(x) \\
& t_a^+(U) = \left\{\bigcup_{x \in U} K_x \mid \forall x \in U, K_x \in \unions(t_a(x))\right\}
\end{align}

The proof is by induction on $w$. The basic case $w = \varepsilon$ is trivial given the above definitions. Assume
\begin{equation}
\forall U \in \funP X, \semNDA{U}(w) = \bigwedge_{y \in U} \semAA{y}(w)
\end{equation}

Then fix a $U \in \funP X$ and compute the two quantities for $aw$:

\begin{align}
& \semNDA{U}(aw) = \bigvee_{V \in t_a^+(U)} \semNDA{V}(w) = \bigvee_{V \in t_a^+(U)} \bigwedge_{x \in V} \semAA{x}(w) \\
& \bigwedge_{y \in U} \semAA{y}(aw) = \bigwedge_{y \in U} \bigvee_{W \in t_a(y)} \bigwedge_{x \in W} \semAA{x}(w)
\end{align}

These quantities are equal.

Assume the first quantity equals $1$. There is $V \in t_a^+(U)$ such that for all $x \in V$, $\semAA{x}(w) = 1$.
One can find for every $y \in U$ a $K_y \in \unions{t_a(y)}$ such that $V = \bigcup_{y \in U} K_y$. Fix some $y \in U$. By the definition of $\unions$, there
is a $W \in t_a(y)$ such that $W \subseteq K_y$. Let $x \in W$, then $x \in K_y$ so $x \in V$, which yields $\semAA{x}(w) = 1$ by hypothesis.

Assume the second quantity equals $1$. For every $y \in U$ there is $W \in t_a(y)$ such that for all $x \in W$, $\semAA{x}(w) = 1$. Take $K_y$ to be such a $W$, then $K_y \in t_a(y) \subseteq \unions(t_a(y))$ so that we can define $V = \bigcup_{y \in U} K_y \in t_a^+(U)$. Let $x \in V$. By construction of $V$, $x \in K_y$ for some $y \in U$. Hence $\semAA{x}(w) = 1$ by hypothesis on $K_y$. This achieves the proof.

\section{Proofs of Section~\ref{sec:upto}}

\subsection{Proof of Proposition~\ref{lem:ourbialgebra}}

The proof that $(\funT X, \muTX, c^+)$ is a $\funG\wdld \circ \wdls\funS$-bialgebra consists in the following diagram. Actually, this is the same diagram occuring with standard "distributive laws", because the expression of $c^+$ is the same in both frameworks and units of monads are not involved.

% https://tikzcd.yichuanshen.de/#N4Igdg9gJgpgziAXAbVABwnAlgFyxMJZABgBpiBdUkANwEMAbAVxiRAB12AzJsAFU49+AAgAaIAL6l0mXPkIoAjOSq1GLNoN58xk6SAzY8BIgCZSi1fWatEHbtq1gA4k4DKTneKkyj8ogDMFlbqtvZCrg787p66PgayxgrIAKzB1NYadk6RQh5RAgVx+oZyJigAbCoZoZpRubz5Ql56vmXJypY1NnXNnjkxRd4lif5KpKYhPdkFA7NRTdrFbUlEygFTWeGO9YN9Q60JfuUkpAAsm2H9uwuxwytjyEEX3Vtz+3l3h6WrleeXvV4DTAi2iB3iP0eQUmryuNw+O0+4JGx2SZwmAJmEWuSIRInuR3aRDSMLU022Lj2iOp+MkqhgUAA5vAiKAuAAnCAAWyQZBAOAgSCqZLe82awgAxocOdyhdQBUgAOywwFggDuUAYcAA+sAqfiJCBqAw6AAjGAMAAKo3KIHZWEZAAscNLOTzEMr+YLEAAOFVYmnOYScDUMKC6u6G+Iy91+r1IACc-ophWxC2D7C5TD4BJjifl3sUfMycNDOr1tyGUf0ecQRYLSEUyhFcIiGdD4Yr4tE1bZbsbzYViDSLdVwPyWZzrtliD5Q+bJbqk9z-brDcQ5lHAZEUujq83Q6CW-sZYjle7vZAtaPQ-Rx5ywnbms7ken7rvQ5Hi+3484y7fjbrl+tR2BKAB6ADUAGzuuSb3kU4FQcaZoWtaqJsPaTounuM7CkOm7fvYk7OG4U7IeaVo2godoOs60GevOBEgSemrljijSvjh7qKAe3pHoRcyTqRK4zjx66KPxzHvEsIbPmeF5GiAJoUWhRJ2JhdFcY2N6FneAn1E+YbyTSXiGuRqFURhtHYTWq6KB+3p6VJ8IcZm2ZTlpdYOQOybSSC1z-p5EnicWzlpniE7Zt4FASEAA
\begin{tikzcd}[column sep=tiny]
\funT\funT X \arrow[rd, "\funT\funT c"] \arrow[r, "\muTX"] \arrow[dddd, "\funT c^+"'] & \funT X \arrow[rd, "\funT c"] \arrow[rrrrr, "c^+"]                                                          &                                                          &                                                                                                &                                                                                                          &                                                                                               & \funG\funS\funT X                                      \\
                                                                                      & \funT\funT\funG\funS\funT X \arrow[d, "\funT\wdls_{\funS\funT X}"'] \arrow[r, "\muTGSTX"']                  & \funT\funG\funS\funT X \arrow[r, "\wdls_{\funS\funT X}"] & \funG\funT\funS\funT X \arrow[rr, "\funG  \wdld_{\funT X}"]                                    &                                                                                                          & \funG\funS\funT\funT X \arrow[ru, "\funG\funS\muTX"]                                          &                                                        \\
                                                                                      & \funT\funG\funT\funS\funT X \arrow[d, "\funT\funG \wdld_{\funT X}"] \arrow[rr, "\wdls_{\funT\funS\funT X}"] &                                                          & \funG\funT\funT\funS\funT X \arrow[u, "\funG\muTSTX"] \arrow[r, "\funG\funT \wdld_{\funT X}"'] & \funG\funT\funS\funT\funT X \arrow[r, "\funG \wdld_{\funT\funT X}"'] \arrow[ldd, "\funG\funT\funS\muTX"] & \funG\funS\funT\funT\funT X \arrow[u, "\funG\funS\muTTX"] \arrow[rdd, "\funG\funS\funT\muTX"] &                                                        \\
                                                                                      & \funT\funG\funS\funT\funT X \arrow[ld, "\funT\funG\funS \muTX"]                                             &                                                          &                                                                                                &                                                                                                          &                                                                                               &                                                        \\
\funT\funG\funS\funT X \arrow[rrr, "\wdls_{\funS\funT X}"]                            &                                                                                                             &                                                          & \funG\funT\funS\funT X \arrow[rrr, "\funG \wdld_{\funT X}"]                                    &                                                                                                          &                                                                                               & \funG\funS\funT\funT X \arrow[uuuu, "\funG\funS\muTX"]
\end{tikzcd}

The center pentagons commute because of the $(\muT)$ diagrams of the "weak distributive laws" $\wdld$ and $\wdls$. Going left and then counter-clockwise, polygons commute by definition of $c^+$, naturality of $\wdls$, naturality of $\wdld$, monad property of $\muT$, definition of $c^+$, naturality of $\muT$.

\subsection{Proof of the diagram in Remark~\ref{rem:wdlmorphism}}
First note that $\supp$ is natural because for any function $f : X \to Y$, reminding that $\funD f(\phi)(y) = \sum_{x \in f^{-1}(\{y\})} \phi(x)$ we have
\begin{align}
(\supp_Y \circ \funD f)(\phi) & = \{y \mid \supp_X(\phi) \cap f^{-1}(\{y\}) \neq \emptyset \} \\
& = \{ f(x) \mid x \in \supp_X(\phi)\} = (\funP f \circ \supp_X)(\phi)
\end{align}
Let $\wdld : \monD \monP \Rightarrow \monP \monD$ and $\wdld' : \monP \monP \Rightarrow \monP \monP$ as in Examples~\ref{ex:pd} and~\ref{ex:pp}. Take $\Phi = \sum_{i\in I} p_i A_i \in \funD\funP X$, where the $A_i$ are distinct, $p_i > 0$ and $\sum_{i\in I} p_i = 1$ (hence $I \neq \emptyset)$. Let us compute both paths:
\begin{align}
(\wdld' \circ \supp \funP)_X(\Phi) &= \left\{\wdld'_X(\{A_i\mid i \in I\})\right\} \\
&= \left\{B \subseteq X \mid B \subseteq \bigcup_{i\in I} A_i \text{ and } \forall i \in I, B \cap A_i \neq \emptyset \right\} \\
(\funP\supp \circ \wdld)_X(\Phi) &= \funP\supp_X \left(\left\{\sum_{i\in I} p_i \phi_i \mid \forall i \in I, \supp(\phi_i) \subseteq A_i\right\}\right)\\
& = \left\{ \bigcup_{i\in I} \supp(\phi_i) \mid \forall i \in I, \supp(\phi_i) \subseteq A_i \right\}
\end{align}
Note that in the above expression the $\phi_i$ are not necessarily distinct. Let us prove that these two sets are the same. Let $B \subseteq X$ such that $B \subseteq \bigcup_{i\in I} A_i$ and for all $i \in I$, $B \cap A_i \neq \emptyset$. Denote $B \cap A_i = \{x_1^i,...,x_{n_i}^i\}$ with $n_i \geq 1$. Define for all $i \in I$ the distribution $\phi_i = \sum_{k = 1}^{n_i} \frac{1}{n_i} x_k^i$, then $\sum_{i\in I} \frac{1}{|I|} \phi_i$ is a distribution because $I \neq \emptyset$. We have $\supp_X(\phi_i) = B \cap A_i \subseteq A_i$ and $\bigcup_{i\in I} \supp_X(\phi_i) = B \cap \bigcup_{i\in I} A_i = B$. For the converse inclusion, let $\phi_i \in \funD X$ such that $\supp_X(\phi_i) \subseteq A_i$. Let us prove that $\bigcup_{i\in I} \supp_X(\phi_i)$ satisfies the conditions of the first set. Indeed, it is clear that $\bigcup_{i\in I} \supp_X(\phi_i) \subseteq \bigcup_{i\in I} A_i$. Let $i_0 \in I$, $\left(\bigcup_{i\in I} \supp_X(\phi_i)\right) \cap \supp_X(\phi_{i_0}) = \supp_X(\phi_{i_0}) \neq \emptyset$. This achieves the proof.

\end{document}